\documentclass[aps, pra, reprint, superscriptaddress]{revtex4-2}

\pdfoutput=1

\usepackage{header}

\graphicspath{{graphics/}}

\begin{document}

\bibliographystyle{apsrev4-2}

\title{Simulating Quantum Computations with Tutte Polynomials}

\author{Ryan L. Mann}
\email{mail@ryanmann.org}
\homepage{http://www.ryanmann.org}
\affiliation{School of Mathematics, University of Bristol, Bristol, BS8 1UG, United Kingdom}
\affiliation{Centre for Quantum Computation and Communication Technology, \\ Centre for Quantum Software and Information, \\ Faculty of Engineering \& Information Technology, University of Technology Sydney, NSW 2007, Australia}

\begin{abstract}
We establish a classical heuristic algorithm for exactly computing quantum probability amplitudes. Our algorithm is based on mapping output probability amplitudes of quantum circuits to evaluations of the Tutte polynomial of graphic matroids. The algorithm evaluates the Tutte polynomial recursively using the deletion-contraction property while attempting to exploit structural properties of the matroid. We consider several variations of our algorithm and present experimental results comparing their performance on two classes of random quantum circuits. Further, we obtain an explicit form for Clifford circuit amplitudes in terms of matroid invariants and an alternative efficient classical algorithm for computing the output probability amplitudes of Clifford circuits.
\end{abstract}

\maketitle

\section{Introduction}
\label{section:Introduction}

There is a natural relationship between quantum computation and evaluations of Tutte polynomials~\cite{aharonov2007polynomial, shepherd2010binary}. In particular, quantum probability amplitudes are proportional to evaluations of the Tutte polynomial of graphic matroids. In this paper we use this relationship to establish a classical heuristic algorithm for exactly computing quantum probability amplitudes. While this problem is known to be \mbox{\textsc{\#P}-hard} in general~\cite{fenner1999determining}, our algorithm focuses on exploiting structural properties of an instance to achieve an improved runtime over traditional methods. Previously it was known that this problem can be solved in time exponential in the treewidth of the underlying graph~\cite{markov2008simulating}.

The basis of our algorithm is a mapping between output probability amplitudes of quantum circuits and evaluations of the Tutte polynomial of graphic matroids~\cite{shepherd2009temporally, shepherd2010binary, bremner2010classical}. Our algorithm proceeds to evaluate the Tutte polynomial recursively using the deletion-contraction property. At each step in the recursion, our algorithm computes certain structural properties of the matroid in order to attempt to prune the computational tree. This approach to computing Tutte polynomials was first studied by Haggard, Pearce, and Royle~\cite{haggard2010computing}. Our algorithm can be seen as an adaption of their work to special points of the Tutte plane where we can exploit additional structural properties.

The performance of algorithms for computing Tutte polynomials based on the deletion-contraction property depends on the heuristic used to decide the ordering of the recursion~\cite{pearce2009edge, haggard2010computing, monagan2012new}. We consider several heuristics introduced by Pearce, Haggard, and Royle~\cite{pearce2009edge} and an additional heuristic, which is specific to our algorithm. We present some experimental results comparing the performance of these heuristics on two classes of random quantum circuits corresponding to dense and sparse instances.

The correspondence between output probability amplitudes of quantum circuits and evaluations of Tutte polynomials also allows us to obtain an explicit form for Clifford circuit amplitudes in terms of matroid invariants by a theorem of Pendavingh~\cite{pendavingh2014evaluation}. This gives rise to an alternative efficient classical algorithm for computing output probability amplitudes of Clifford circuits.

This paper is structured as follows. We introduce matroid theory in \mbox{Section~\ref{section:MatroidTheory}} and the Tutte polynomial in \mbox{Section~\ref{section:TuttePolynomial}}. In \mbox{Sections~\ref{section:PottsModelPartitionFunction}, \ref{section:InstantaneousQuantumPolynomialTime}, and \ref{section:QuantumComputationAndTheTuttePolynomial}}, we establish a mapping between output probability amplitudes of quantum circuits and evaluations of the Tutte polynomial of graphic matroids. This is achieved by introducing the Potts model partition function in \mbox{Section~\ref{section:PottsModelPartitionFunction}}, Instantaneous Quantum Polynomial-time circuits in \mbox{Section~\ref{section:InstantaneousQuantumPolynomialTime}}, and a class of universal quantum circuits in \mbox{Section~\ref{section:QuantumComputationAndTheTuttePolynomial}}. In \mbox{Section~\ref{section:EfficientClassicalSimulationCliffordCircuits}}, we use this mapping to obtain an explicit form for Clifford circuit amplitudes in terms of matroid invariants. We also obtain an efficient classical algorithm for computing the output probability amplitudes of Clifford circuits. We describe our algorithm in \mbox{Section~\ref{section:AlgorithmOverview}} and present some experimental results in \mbox{Section~\ref{section:ExperimentalResults}}. Finally, we conclude in \mbox{Section~\ref{section:ConclusionAndOutlook}}.

\section{Matroid Theory}
\label{section:MatroidTheory}

We shall now briefly introduce the theory of matroids. The interested reader is referred to the classic textbooks of Welsh~\cite{welsh1976matroid} and Oxley~\cite{oxley2006matroid} for a detailed treatment. Matroids were introduced by Whitney~\cite{whitney1935abstract} as a structure that generalises the notion of linear dependence. There are many equivalent ways to define a matroid. We shall define a matroid by the independence axioms.
\begin{definition}[Matroid]
    A matroid is a pair \mbox{$M=(\mathcal{S},\mathcal{I})$} consisting of a finite set $\mathcal{S}$, known as the \emph{ground set}, and a collection $\mathcal{I}$ of subsets of $\mathcal{S}$, known as the \emph{independent sets}, such that the following axioms are satisfied.
    \begin{enumerate}
        \item The empty set is a member of $\mathcal{I}$.
        \item Every subset of a member of $\mathcal{I}$ is a member of $\mathcal{I}$.
        \item If $A$ and $B$ are members of $\mathcal{I}$ and \mbox{$\abs{A}>\abs{B}$}, then there exists an \mbox{$x \in A {\setminus} B$} such that \mbox{$B\cup\{x\}$} is a member of $\mathcal{I}$.
    \end{enumerate}
\end{definition}

The rank of a subset $A$ of $\mathcal{S}$ is given by the \emph{rank function} \mbox{$r:2^{\mathcal{S}}\to\mathbb{N}$} of the matroid defined by \mbox{$r(A)\coloneqq\max\left(\abs{X} \mid X \subseteq A,X\in\mathcal{I}\right)$}. The \emph{rank} of a matroid $M$, denoted \mbox{$r(M)$}, is the rank of the set $S$.

The archetypal class of matroids are \emph{vector matroids}. A vector matroid \mbox{$M=(\mathcal{S},\mathcal{I})$} is a matroid whose ground set $\mathcal{S}$ is a subset of a vector space over a field $\mathbb{F}$ and whose independent sets $\mathcal{I}$ are the linearly independent subsets of $\mathcal{S}$. The rank of a subset of a vector matroid is the dimension of the subspace spanned by the corresponding vectors. We say that a matroid is \mbox{$\mathbb{F}$-\emph{representable}} if it is isomorphic to a vector matroid over the field $\mathbb{F}$. A matroid is a \emph{binary matroid} if it is \mbox{$\mathbb{F}_2$-representable} and is a \emph{ternary matroid} if it is \mbox{$\mathbb{F}_3$-representable}. A matroid that is representable over every field is called a \emph{regular matroid}.

Every finite graph \mbox{$G=(V,E)$} induces a matroid \mbox{$M(G)=(\mathcal{S},\mathcal{I})$} as follows. Let the ground set $\mathcal{S}$ be the set of edges $E$ and let the independent sets $\mathcal{I}$ be the subsets of $E$ that are a forest, i.e., they do not contain a simple cycle. It is easy to check that $M(G)$ satisfies the independence axioms. The rank of a subset $A$ of a cycle matroid is \mbox{$\abs{V}-\kappa(A)$}, where $\kappa(A)$ denotes the number of connected components of the subgraph with edge set $A$. The rank of the cycle matroid $M(G)$, denoted \mbox{$r(M(G))$} or simply $r(G)$, is the rank of the set $E$. The matroid $M(G)$ is called the \emph{cycle matroid} of $G$. We say that a matroid is \emph{graphic} if it is isomorphic to the cycle matroid of a graph.

Graphic matroids are regular. To see this consider assigning to the graph $G$ an arbitrary \emph{orientation} $D(G)$, that is, for each edge \mbox{$e=\{u,v\}$} in $G$, we choose one of $u$ and $v$ to be the \emph{positive end} and the other one to be the \emph{negative end}. Then construct the \emph{oriented incidence matrix} of $G$ with respect to the orientation $D(G)$.
\begin{definition}[Oriented incidence matrix]
    Let \mbox{$G=(V,E)$} be a graph and let $D(G)$ be an orientation of $G$. Then the oriented incidence matrix of $G$ with respect to $D(G)$ is the \mbox{$\abs{V}\times\abs{E}$} matrix \mbox{$A_{D(G)}=(a_{ve})_{\abs{V}\times\abs{E}}$} whose entries are
    \begin{equation}
        a_{ve} =
        \begin{dcases*}
            +1, & if $v$ is the positive end of $e$; \\
            -1, & if $v$ is the negative end of $e$; \\
            0, & otherwise.
        \end{dcases*} \notag
    \end{equation}
\end{definition}
The rows of the oriented incidence matrix $A_{D(G)}$ correspond to the vertices of $G$ and the columns correspond to the edges of $G$. Each column contains exactly one $+1$ and exactly one $-1$ representing the positive and negative ends of the corresponding edge. If the column space of $A_{D(G)}$ is the ground set of a vector matroid, then it is easy to see that a subset is independent if and only if it is a forest in $G$. Hence, the oriented incidence matrix provides a representation of a graphic matroid over every field.

A \emph{minor} of a matroid $M$ is a matroid that is obtained from $M$ by a sequence of \emph{deletion} and \emph{contraction} operations.
\begin{definition}[Deletion]
    Let \mbox{$M=(\mathcal{S},\mathcal{I})$} be a matroid and let $e$ be an element of the ground set. Then the deletion of $M$ with respect to $e$ is the matroid \mbox{$M{\setminus}\{e\}=(\mathcal{S}',\mathcal{I}')$} whose ground set is \mbox{$\mathcal{S}'=\mathcal{S}{\setminus}\{e\}$} and whose independent sets are \mbox{$\mathcal{I}'=\{I\subseteq\mathcal{S}{\setminus}\{e\} \mid I\in\mathcal{I}\}$}.
\end{definition}
The deletion of an element from the cycle matroid of a graph corresponds to removing an edge from the graph.
\begin{definition}[Contraction]
    Let \mbox{$M=(\mathcal{S},\mathcal{I})$} be a matroid and let $e$ be an element of the ground set. Then the contraction of $M$ with respect to $e$ is the matroid \mbox{$M/\{e\}=(\mathcal{S}',\mathcal{I}')$} whose ground set is \mbox{$\mathcal{S}'=\mathcal{S}{\setminus}\{e\}$} and whose independent sets are \mbox{$\mathcal{I}'=\{I\subseteq\mathcal{S}{\setminus}\{e\} \mid I\cup\{e\}\in\mathcal{I}\}$}.
\end{definition}
The contraction of an element from the cycle matroid of a graph corresponds to removing an edge from the graph and merging its two endpoints.

An element $e$ of a matroid is said to be a \emph{loop} if $\{e\}$ is not an independent set and said to be a \emph{coloop} if $e$ is contained in every maximally independent set. If an element $e$ of a matroid is either a loop or a coloop then the deletion and contraction of $e$ are equivalent.

\section{The Tutte Polynomial}
\label{section:TuttePolynomial}

We shall now briefly introduce the Tutte polynomial, which is a well-known invariant in matroid and graph theory.
\begin{definition}[Tutte polynomial of a matroid]
    Let \mbox{$M=(\mathcal{S},\mathcal{I})$} be a matroid with rank function \mbox{$r:2^{\mathcal{S}}\to\mathbb{N}$}. Then the Tutte polynomial of $M$ is the bivariate polynomial defined by
    \begin{equation}
        \mathrm{T}(M;x,y) \coloneqq \sum_{A \subseteq \mathcal{S}}(x-1)^{r(M)-r(A)}(y-1)^{\abs{A}-r(A)}. \notag
    \end{equation}
\end{definition}

The Tutte polynomial may also be defined recursively by the \emph{deletion-contraction property}.
\begin{definition}[Deletion-contraction property]
   Let \mbox{$M=(\mathcal{S},\mathcal{I})$} be a matroid. If $M$ is the empty matroid, i.e., \mbox{$\mathcal{S}=\varnothing$}, then
   \begin{equation}
        \mathrm{T}(M;x,y) = 1. \notag
    \end{equation}
    Otherwise, let $e$ be an element of the ground set. If $e$ is a loop, then
    \begin{equation}
        \mathrm{T}(M;x,y) = y\mathrm{T}(M{\setminus}\{e\};x,y). \notag
    \end{equation}
    If $e$ is a coloop, then
    \begin{equation}
        \mathrm{T}(M;x,y) = x\mathrm{T}(M/\{e\};x,y). \notag
    \end{equation}
    Finally, if $e$ is neither a loop nor a coloop, then
    \begin{equation}
        \mathrm{T}(M;x,y) = \mathrm{T}(M{\setminus}\{e\};x,y)+\mathrm{T}(M/\{e\};x,y). \notag
    \end{equation}
\end{definition}
The deletion-contraction property immediately gives an algorithm for recursively computing the Tutte polynomial. This algorithm is in general inefficient, but the performance may be improved by using isomorphism testing to reduce the number of recursive calls~\cite{sekine1995computing}. The performance of this algorithm depends on the heuristic used to choose elements of the ground set~\cite{pearce2009edge, haggard2010computing, monagan2012new}. Bj\"orklund et al.~\cite{bjorklund2008computing} showed that the Tutte polynomial can be computed in time exponential in the number of vertices.

The Tutte polynomial of a graph may be recovered by considering the Tutte polynomial of the cycle matroid of a graph and using the fact that the rank of a subset $A$ of a cycle matroid is \mbox{$\abs{V}-\kappa(A)$}, where $\kappa(A)$ denotes the number of connected components of the subgraph with edge set $A$.
\begin{definition}[Tutte Polynomial of a graph]
    Let \mbox{$G=(V,E)$} be a graph and let $\kappa(A)$ denote the number of connected components of the subgraph with edge set $A$. Then the Tutte polynomial of $G$ is a polynomial in $x$ and $y$, defined by
    \begin{equation}
        \mathrm{T}(G;x,y) \coloneqq \sum_{A \subseteq E}(x-1)^{\kappa(A)-\kappa(E)}(y-1)^{\kappa(A)+\abs{A}-\abs{V}}. \notag
    \end{equation}
\end{definition}

The Tutte polynomial is trivial to evaluate along the hyperbola \mbox{$(x-1)(y-1)=1$} for any matroid. In the case of graphic matroids, Jaeger, Vertigan, and Welsh~\cite{jaeger1990computational} showed that the Tutte polynomial is \mbox{\textsc{\#P}-hard} to evaluate, except along this hyperbola and when \mbox{$(x,y)$} equals one of nine special points.
\begin{theorem}[Jaeger, Vertigan, and Welsh~\cite{jaeger1990computational}]
    The problem of evaluating the Tutte polynomial of a graphic matroid at an algebraic point in the \mbox{$(x,y)$-plane} is \mbox{\emph{\#P-hard}} except when \mbox{$(x-1)(y-1)=1$} or when \mbox{$(x,y)$} equals one of \mbox{$(1,1)$}, \mbox{$(-1,-1)$}, \mbox{$(0,-1)$}, \mbox{$(-1,0)$}, \mbox{$(i,-i)$}, \mbox{$(-i,i)$}, \mbox{$(j,j^2)$}, \mbox{$(j,j^2)$}, or \mbox{$(j^2,j)$}, where \mbox{$j=\exp(2\pi i/3)$}. In each of these exceptional cases the evaluation can be done in polynomial time.
\end{theorem}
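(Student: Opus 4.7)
My plan is to treat the statement in two pieces: positive polynomial-time evaluability on the exceptional locus (the hyperbola together with the nine points), and \#P-hardness everywhere else. I would begin with the hyperbola $(x-1)(y-1)=1$. Substituting $y-1 = 1/(x-1)$ into the rank-nullity expansion, the summand $(x-1)^{r(M)-r(A)}(y-1)^{|A|-r(A)}$ collapses to a power of $(x-1)$ whose exponent is linear in $|A|$ and the global rank. The sum over $A \subseteq \mathcal{S}$ then factorises over elements of the ground set, yielding a closed form in $|E|$, $|V|$, and $\kappa(E)$, each of which is trivially computable.

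For the nine exceptional points, the strategy is to identify each evaluation with a matroid invariant known to admit a polynomial-time algorithm. The point $(1,1)$ counts bases, giving the number of spanning trees, computable by the Matrix-Tree theorem. The points $(-1,-1)$, $(i,-i)$, and $(-i,i)$ can be expressed in terms of the dimension of the bicycle space of $G$, that is, the intersection of the cycle and cocycle spaces over $\mathbb{F}_2$; this is a linear-algebraic computation on the oriented incidence matrix. The points $(0,-1)$ and $(-1,0)$ reduce to the $\mathbb{F}_2$-ranks of the cycle and cocycle spaces themselves. Finally, the evaluations at $j$ and $j^2$ connect, via Tutte's original flow-polynomial identities, to the number of nowhere-zero $\mathbb{F}_3$-flows, which is again a rank computation, this time over $\mathbb{F}_3$.

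For the hardness half, the approach is polynomial interpolation from a seed of known hardness. A natural seed is the line $y=0$: there $\mathrm{T}(G;x,0)$ (up to a sign and a power of $x-1$) gives the chromatic polynomial, and Valiant's theorem implies that evaluating it at integer $x \geq 3$ is \#P-hard. From such a seed I would use graph transformations --- the $k$-thickening and $k$-stretch of Brylawski, together with disjoint union and 2-sum --- to produce parametrised families whose Tutte evaluations at the target point are polynomial in a free parameter $k$, with coefficients that are evaluations at the seed. Lagrange interpolation then transports hardness from the seed along an algebraic curve passing through the target.

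The technical heart, and the main obstacle, is the last step: one must verify that for every algebraic target point $(x,y)$ outside the hyperbola and the nine exceptional points, some combination of these transformations produces a curve through the target together with a hard seed, and that the resulting Vandermonde-type interpolation system is non-degenerate. This requires organising the Tutte plane by which transformations access which regions and checking that the exceptional locus is exactly where every available reduction degenerates --- either because the image curve collapses to a point, because the interpolation matrix becomes singular, or because the only accessible seeds lie within the exceptional locus itself. Carrying out this case analysis carefully is what occupies the bulk of the Jaeger--Vertigan--Welsh argument.
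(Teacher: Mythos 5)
The paper offers no proof of this theorem: it is quoted directly from Jaeger, Vertigan, and Welsh, so the only meaningful comparison is with their published argument, which your outline does track at a high level (a closed form on the hyperbola, known algebraic invariants at the special points, and interpolation via thickening and stretching from hard seeds). As written, though, the proposal contains two concrete errors and one essential gap. First, the points \mbox{$(j,j^2)$} and \mbox{$(j^2,j)$} lie on the hyperbola \mbox{$(x-1)(y-1)=3$}, and their tractability comes from an expression in terms of the dimension of the $\mathbb{F}_3$ bicycle space (the intersection of the cycle and cocycle spaces over $\mathbb{F}_3$), together with a root-of-unity phase --- not from the number of nowhere-zero $\mathbb{F}_3$-flows. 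That count is the flow polynomial at \mbox{$q=3$}, i.e., the evaluation at \mbox{$(0,-2)$}, which is one of the \mbox{\textsc{\#P}-hard} points; your step as stated would ``prove'' a hard problem easy. Second, your hard seed on the line \mbox{$y=0$} is misplaced: since \mbox{$\mathrm{P}(G;\lambda)=(-1)^{r(E)}\lambda^{\kappa(E)}\mathrm{T}(G;1-\lambda,0)$}, the \mbox{\textsc{\#P}-hardness} of counting proper $\lambda$-colourings for integer \mbox{$\lambda\geq 3$} yields hard points at integer \mbox{$x=1-\lambda\leq -2$}, not at \mbox{$x\geq 3$}.

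The more serious issue is the gap you yourself flag: the entire content of the Jaeger--Vertigan--Welsh theorem is the case analysis showing that every algebraic point off the exceptional locus is reachable by a non-degenerate interpolation from some hard seed. Thickening and stretching move points only along constrained curves (essentially within the hyperbolae \mbox{$(x-1)(y-1)=q$} and their images), and the argument must treat separately the lines \mbox{$x=1$} and \mbox{$y=1$}, the hyperbolae with integer \mbox{$q\geq 2$} (where a Potts-model reduction replaces generic interpolation), the points with \mbox{$x=-1$} or \mbox{$y=-1$}, and genuinely complex points, verifying in each case that the Vandermonde system is non-singular and that the seed invoked is itself hard. Asserting that ``some combination of these transformations'' reaches the target is a restatement of the theorem, not a proof of it; without that analysis the hardness half is unestablished, and the two local errors above would need repair even as a sketch.
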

Vertigan~\cite{vertigan1998bicycle} extended this result to vector matroids.
\begin{theorem}[Vertigan~\cite{vertigan1998bicycle}]
    The problem of evaluating the Tutte polynomial of a vector matroid over a field $\mathbb{F}$ at an algebraic point in the \mbox{$(x,y)$-plane} is \mbox{\emph{\#P-hard}} except when \mbox{$(x-1)(y-1)=1$}, \mbox{$(x,y)$} equals \mbox{$(1,1)$}, or when
    \begin{enumerate}
        \item \mbox{$\abs{\mathbb{F}}=2$} and \mbox{$(x,y)$} equals one of \mbox{$(-1,-1)$}, \mbox{$(0,-1)$}, \mbox{$(-1,0)$}, \mbox{$(i,-i)$}, or \mbox{$(-i,i)$};
        \item \mbox{$\abs{\mathbb{F}}=3$} and \mbox{$(x,y)$} equals one of \mbox{$(j,j^2)$} or \mbox{$(j^2,j)$}, where \mbox{$j=\exp(2\pi i/3)$}; or
        \item \mbox{$\abs{\mathbb{F}}=4$} and \mbox{$(x,y)$} equals \mbox{$(-1,-1)$}.
    \end{enumerate}
    In each of these exceptional cases, except when \mbox{$(x,y)$} equals \mbox{$(1,1)$}, the evaluation can be done in polynomial time.
\end{theorem}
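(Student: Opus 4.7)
The plan is to split the argument into two complementary parts: establishing polynomial-time algorithms at the stated exceptional points, and proving \#P-hardness everywhere else. Since graphic matroids are representable over every field, the Jaeger--Vertigan--Welsh hardness result embeds directly into the vector setting, so the exceptional list for vector matroids can only be equal to or larger than the graphic one; the delicate question is which extra points become tractable when one restricts to a specific small field.

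For the tractable direction, the hyperbola $(x-1)(y-1)=1$ admits the well-known factorisation of $\mathrm{T}(M;x,y)$ into a product of simple terms depending only on $r(M)$, $\abs{\mathcal{S}}$, and a single power of $(y-1)$, so evaluation reduces to one rank computation on the representation matrix. At each isolated exceptional point I would interpret the Tutte evaluation as a classical matroid invariant extractable from the representation matrix by linear algebra. For binary matroids the points $(-1,-1)$, $(0,-1)$, $(-1,0)$, $(i,-i)$, $(-i,i)$ encode, up to simple prefactors, the dimension of the bicycle space of the binary representation (equivalently the kernel of a derived matrix over $\mathbb{F}_{2}$), which is computed by Gaussian elimination. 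The ternary points $(j,j^{2})$ and $(j^{2},j)$ yield analogous invariants over $\mathbb{F}_{3}$, and $(-1,-1)$ over $\mathbb{F}_{4}$ reduces to a similar orthogonal-complement computation. Each is polynomial time once a representation of the matroid is supplied.

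For the hardness direction I would propagate the Jaeger--Vertigan--Welsh reduction along curves in the Tutte plane using matroid operations that preserve $\mathbb{F}$-representability. The key tools are $k$-thickening (adding parallel elements) and $k$-stretching (adding elements in series), which map $\mathrm{T}(M;x,y)$ to $\mathrm{T}(M_{k};x',y')$ along explicit algebraic curves in $(x',y')$, together with $2$-sums against small fixed $\mathbb{F}$-representable matroids. Choosing these gadgets so that the image curve passes through a point already known to be hard, and then interpolating the resulting univariate polynomial in $k$, would recover $\mathrm{T}(M;x,y)$ at the target point.

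The main obstacle will be the case analysis showing the stated exceptional set is exhaustive on each field. For $\mathbb{F}\in\{\mathbb{F}_{2},\mathbb{F}_{3},\mathbb{F}_{4}\}$ one must verify that starting from a generic $\mathbb{F}$-representable hard instance and applying the representability-preserving moves above, one can reach any target $(x,y)$ outside the tractable locus without the interpolation curve falling entirely inside a tractable region. This requires a careful orbit analysis in the Tutte plane under thickening and stretching, together with separate gadgets tuned to each small-field exception where bicycle-type invariants enlarge the tractable set; the genuinely open status of $(1,1)$ for general representable matroids (counting bases) illustrates how subtle the completeness of the list really is.
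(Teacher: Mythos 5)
This theorem is not proved in the paper at all: it is quoted verbatim from Vertigan's work as an imported result, so there is no internal argument to compare your sketch against; what can be judged is whether your outline would plausibly reconstruct Vertigan's theorem. Before that, one framing error needs fixing. Since graphic matroids are regular, every graphic matroid is $\mathbb{F}$-representable for every field $\mathbb{F}$, so passing from graphic matroids to vector matroids over a fixed field \emph{enlarges} the instance class. Hardness therefore transfers from the Jaeger--Vertigan--Welsh graphic case, and the exceptional (non-hard) locus for $\mathbb{F}$-representable matroids can only be a \emph{subset} of the nine-point graphic list, not ``equal to or larger'' as you write. There are no ``extra points that become tractable''; the delicate question is the opposite one, namely which of the graphic-exceptional points become \#P-hard once the class is enlarged --- e.g.\ $(j,j^2)$ and $(j^2,j)$ for binary matroids, $(i,-i)$ and $(-i,i)$ for ternary matroids, and $(-1,-1)$ once $|\mathbb{F}|\geq 5$.

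That misdirection matters because it is exactly where your hardness plan has its gap. Thickening, stretching, $2$-sums and interpolation propagate hardness along curves from points already known to be hard, and this is indeed the standard machinery; but the substance of Vertigan's theorem is the completeness claim that \emph{only} the listed points survive for each fixed field, i.e.\ proving \#P-hardness at the finitely many special points where the graphic problem is easy. Your proposal explicitly defers this case analysis as ``the main obstacle,'' so as it stands it is a research plan for the hard half rather than a proof. The tractable half of your sketch is essentially right in spirit: the hyperbola $(x-1)(y-1)=1$ is trivial, the listed special points reduce to linear algebra over the relevant field via bicycle-space and weight-enumerator invariants (compare Theorem~\ref{theorem:TuttePolynomialExplicitForm} for the binary point $(-i,i)$), and you correctly set aside $(1,1)$, whose status over a fixed field the paper notes is open.
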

Snook~\cite{snook2012counting} showed that when \mbox{$(x,y)$} equals \mbox{$(1,1)$} and $\mathbb{F}$ is either a finite field of fixed characteristic or a fixed infinite field, then evaluating the Tutte polynomial is \mbox{\textsc{\#P}-hard}. It is an open problem to understand the complexity of evaluating the Tutte polynomial at \mbox{$(1,1)$} over any fixed field.

\section{The Potts Model Partition Function}
\label{section:PottsModelPartitionFunction}

The Potts model is a statistical physical model described by an integer \mbox{$q\in\mathbb{Z}^+$} and a graph \mbox{$G=(V,E)$}, with the vertices representing spins and the edges representing interactions between them. A set of edge weights \mbox{$\{\omega_e\}_{e \in E}$} characterise the interactions and a set of vertex weights \mbox{$\{\upsilon_v\}_{v \in V}$} characterise the external fields at each spin. A configuration of the model is an assignment $\sigma$ of each spin to one of $q$ possible states. The \emph{Potts model partition function} is defined as follows.
\begin{definition}[Potts model partition function]
    Let \mbox{$q\in\mathbb{Z}^+$} be an integer and let \mbox{$G=(V,E)$} be a graph with the weights \mbox{$\Omega=\{\omega_e\}_{e \in E}$} assigned to its edges and the weights \mbox{$\Upsilon=\{\upsilon_v\}_{v \in V}$} assigned to its vertices. Then the \mbox{$q$-state} Potts model partition function is defined by
    \begin{equation}
        \mathrm{Z}_\mathrm{Potts}(G;q,\Omega,\Upsilon) \coloneqq \sum_{\sigma\in\mathbb{Z}_q^V}w_G(\sigma), \notag
    \end{equation}
    where
    \begin{equation}
        w_G(\sigma) = \exp\left(\sum_{\{u,v\}\in E}\omega_{\{u,v\}}\delta(\sigma_u,\sigma_v)+\sum_{v\in V}\upsilon_v\delta(\sigma_v)\right). \notag
    \end{equation}
\end{definition}

The Potts model partition function with an external field is equivalent to the zero-field case on an augmented graph \mbox{$G'=(V',E')$}. To construct $G'$ from $G$, for each of the connected components \mbox{$\{C_i\}_{i=1}^{\kappa(E)}$} of $G$ add a new vertex $u_i$ and for every vertex \mbox{$v \in V(C_i)$} add an edge \mbox{$e_v=\{u_i,v\}$} with the weight $\upsilon_v$ assigned to it. Then we have the following proposition.
\begin{proposition}[{restate=[name=restatement]PottsModelPartitionFunctionExternalField}]
    \label{proposition:PottsModelPartitionFunctionExternalField}
    \begin{equation}
        \mathrm{Z}_\mathrm{Potts}(G;q,\Omega,\Upsilon) = q^{-\kappa(E)}\mathrm{Z}_\mathrm{Potts}(G';q,\Omega\cup\Upsilon,0). \notag
    \end{equation}
\end{proposition}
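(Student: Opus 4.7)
The plan is to prove the identity by a direct change-of-variables argument exploiting the spin-permutation symmetry of the Potts model restricted to each connected component.

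First, I would split a configuration $\sigma' \in \mathbb{Z}_q^{V'}$ on the augmented graph into its restriction $\sigma \in \mathbb{Z}_q^V$ and the assignment $\tau = (\tau_1,\ldots,\tau_{\kappa(E)}) \in \mathbb{Z}_q^{\kappa(E)}$ to the newly added vertices. Since every new edge $e_v$ joins a vertex $v \in V(C_i)$ to the unique auxiliary vertex $u_i$ for its component, the exponent in $\mathrm{Z}_\mathrm{Potts}(G';q,\Omega\cup\Upsilon,0)$ decomposes cleanly as
\begin{equation}
    \sum_{\{u,v\}\in E}\omega_{\{u,v\}}\delta(\sigma_u,\sigma_v) + \sum_{i=1}^{\kappa(E)}\sum_{v\in V(C_i)}\upsilon_v\delta(\sigma_v,\tau_i). \notag
\end{equation}

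Next, for each fixed $\tau$, I would perform the shift $\sigma_v \mapsto \sigma_v + \tau_i$ for every $v\in V(C_i)$, viewing addition in $\mathbb{Z}_q$. The key observation is that any edge $\{u,v\}\in E$ has both endpoints in the same component, so the shift is identical on $u$ and $v$ and the quantity $\delta(\sigma_u,\sigma_v)$ is invariant; meanwhile, $\delta(\sigma_v+\tau_i,\tau_i) = \delta(\sigma_v,0)$, which recovers the original external-field term $\upsilon_v\delta(\sigma_v)$ in the definition of $w_G$. Because the shift is a bijection on $\mathbb{Z}_q^V$, after relabeling we obtain
\begin{equation}
    \sum_{\sigma\in\mathbb{Z}_q^V}\exp\!\left(\sum_{\{u,v\}\in E}\omega_{\{u,v\}}\delta(\sigma_u,\sigma_v) + \sum_{i=1}^{\kappa(E)}\sum_{v\in V(C_i)}\upsilon_v\delta(\sigma_v,\tau_i)\right) = \mathrm{Z}_\mathrm{Potts}(G;q,\Omega,\Upsilon), \notag
\end{equation}
independently of the chosen $\tau$.

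Finally, summing over the $q^{\kappa(E)}$ choices of $\tau$ and dividing both sides by this factor yields the claimed identity. The only real subtlety is the observation that every original edge lies within a single connected component (hence its interaction term is invariant under the component-wise shift); once this is noted, the rest is bookkeeping. No genuine obstacle is expected, but care must be taken that the definition of $\delta(\sigma_v)$ as $\delta(\sigma_v,0)$ is aligned with the identification $\tau_i=0$ in the reference configuration used after the shift.
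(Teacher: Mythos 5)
Your proof is correct and rests on essentially the same idea as the paper's: the component-wise $\mathbb{Z}_q$ spin-shift symmetry, which makes every value of each auxiliary vertex's spin contribute equally and thus produces the factor $q^{\kappa(E)}$. You merely run the argument in the opposite direction (starting from $\mathrm{Z}_\mathrm{Potts}(G';q,\Omega\cup\Upsilon,0)$ and reducing to $\mathrm{Z}_\mathrm{Potts}(G;q,\Omega,\Upsilon)$, with the shift made explicit rather than invoked as ``combining terms invariant under a $\mathbb{Z}_q$ symmetry''), which is an immaterial difference.
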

A similar proposition appears in Welsh's monograph~\cite{welsh1993complexity}; we prove \mbox{Proposition~\ref{proposition:PottsModelPartitionFunctionExternalField}} in \mbox{Appendix~\ref{section:PottsModelPartitionFunctionExternalField}}.

It will be convenient to consider the Potts model with weights that are all positive integer multiples of a complex number $\theta$. We shall implement this model on the augmented graph $G'$ with all weights equal to $\theta$ by replacing each edge with the appropriate number of parallel edges. Let us denote the partition function of this model by \mbox{$\mathrm{Z}_\mathrm{Potts}(G';q,\theta)$}. Then, we have the following proposition relating the partition function of this model to the Tutte polynomial of the augmented graph $G'$.
\begin{proposition}
    \label{proposition:PottsModelPartitionFunctionTuttePolynomialRelation}
    \begin{equation}
        \mathrm{Z}_\mathrm{Potts}(G';q,\theta) = q^{\kappa(E')}(e^\theta-1)^{r(G')}\mathrm{T}\left(G';x,y\right), \notag
    \end{equation}
    where \mbox{$x=\frac{e^\theta+q-1}{e^\theta-1}$} and \mbox{$y=e^\theta$}.
\end{proposition}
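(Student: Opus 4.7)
The plan is to carry out the classical Fortuin--Kasteleyn high-temperature expansion of the partition function, which rewrites $\mathrm{Z}_\mathrm{Potts}(G';q,\theta)$ as a sum over subsets of edges, and then to identify the resulting expression termwise with the subset expansion of the Tutte polynomial of $G'$.

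First I would unfold the definition of $\mathrm{Z}_\mathrm{Potts}(G';q,\theta)$. After the parallel-edge implementation of integer multiples of $\theta$, every edge of $G'$ carries weight $\theta$ and the external field is zero, so the Boltzmann weight factors as $w_{G'}(\sigma)=\prod_{\{u,v\}\in E'}\exp(\theta\,\delta(\sigma_u,\sigma_v))$. The key elementary identity to apply to each factor is $\exp(\theta\,\delta(\sigma_u,\sigma_v))=1+(e^\theta-1)\,\delta(\sigma_u,\sigma_v)$, which holds because $\delta$ takes values in $\{0,1\}$. Expanding the product selects, on each edge, either the constant $1$ or the term $(e^\theta-1)\delta$, turning the product into a sum indexed by subsets $A\subseteq E'$ (the set of edges where the second option is chosen). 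After swapping with the sum over $\sigma\in\mathbb{Z}_q^{V'}$, the product of deltas indexed by $A$ forces $\sigma$ to be constant on each connected component of the spanning subgraph $(V',A)$, so the inner sum over $\sigma$ collapses to $q^{\kappa(A)}$. This yields
\[
\mathrm{Z}_\mathrm{Potts}(G';q,\theta)=\sum_{A\subseteq E'}(e^\theta-1)^{|A|}\,q^{\kappa(A)}.
\]

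The second step is an algebraic identification with the Tutte polynomial. The prescribed substitution gives $y-1=e^\theta-1$ and $x-1=q/(e^\theta-1)$. Plugging these into the graph subset-sum definition of $\mathrm{T}(G';x,y)$ and using $r(G')=|V'|-\kappa(E')$, the factor associated with $A$ simplifies to $q^{\kappa(A)-\kappa(E')}(e^\theta-1)^{|A|-r(G')}$. Multiplying by the $A$-independent prefactor $q^{\kappa(E')}(e^\theta-1)^{r(G')}$ reproduces exactly the sum displayed above, which yields the claim.

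The proof is essentially bookkeeping and presents no serious obstacle. The main care point is tracking the exponents of $q$ and $(e^\theta-1)$ when passing between the two subset expansions; the cycle-matroid relation $r(A)=|V'|-\kappa(A)$ makes this immediate and reconciles the graph form of the Tutte polynomial used here with the matroid form stated earlier in the excerpt.
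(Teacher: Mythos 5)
Your proof is correct: the Fortuin--Kasteleyn expansion via $e^{\theta\delta}=1+(e^\theta-1)\delta$, collapsing the spin sum to $q^{\kappa(A)}$, and matching exponents against the subset expansion of $\mathrm{T}(G';x,y)$ using $r(G')=\abs{V'}-\kappa(E')$ is exactly the standard random-cluster argument. The paper does not prove this proposition itself but defers to Welsh's monograph, and your argument is essentially that cited proof, with the exponent bookkeeping carried out correctly.
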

In particular, the \mbox{$q$-state} Potts model partition function is related to the Tutte polynomial along the hyperbola \mbox{$(x-1)(y-1)=q$}. For a proof of \mbox{Proposition~\ref{proposition:PottsModelPartitionFunctionTuttePolynomialRelation}}, we refer the reader to Welsh's monograph~\cite[Section 4.4]{welsh1993complexity}.

The \mbox{$2$-state} Potts model partition function specialises to the \emph{Ising model partition function}.
\begin{definition}[Ising model partition function]
    Let \mbox{$G=(V,E)$} be a graph with the weights \mbox{$\Omega=\{\omega_e\}_{e \in E}$} assigned to its edges and the weights \mbox{$\Upsilon=\{\upsilon_v\}_{v \in V}$} assigned to its vertices. Then the Ising model partition function is defined by
    \begin{equation}
        \mathrm{Z}_\mathrm{Ising}(G;\Omega,\Upsilon) \coloneqq \sum_{\sigma\in\{-1,+1\}^V}w_G(\sigma), \notag
    \end{equation}
    where
    \begin{equation}
        w_G(\sigma) = \exp\left(\sum_{\{u,v\}\in E}\omega_{\{u,v\}}\sigma_u\sigma_v+\sum_{v\in V}\upsilon_v\sigma_v\right). \notag
    \end{equation}
\end{definition}
\begin{proposition}
    \label{proposition:PottsIsingRelation}
    \begin{equation}
        \mathrm{Z}_\mathrm{Potts}(G;2,\Omega,\Upsilon) = w_G\mathrm{Z}_\mathrm{Ising}\left(G;\frac{\Omega}{2},\frac{\Upsilon}{2}\right), \notag
    \end{equation}
    where \mbox{$w_G=\exp\left(\frac{1}{2}\sum_{e\in E}\omega_e+\frac{1}{2}\sum_{v\in V}\upsilon_v\right)$}.
\end{proposition}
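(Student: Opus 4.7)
The plan is to reduce the claim to a pointwise identity between the summand of the $q=2$ Potts partition function and a rescaled Ising summand, via the standard bijection between $\mathbb{Z}_2$-valued spin configurations and $\{-1,+1\}$-valued spin configurations. Concretely, I would change variables by setting $s_v \coloneqq 1-2\sigma_v$, so that $\sigma_v=0$ corresponds to $s_v=+1$ and $\sigma_v=1$ to $s_v=-1$. This is a bijection between $\mathbb{Z}_2^V$ and $\{-1,+1\}^V$, so the outer sums on both sides of the claimed identity are indexed by equivalent sets, and it suffices to match weights configuration by configuration.

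Next I would rewrite the two kinds of Kronecker deltas appearing in $w_G(\sigma)$ in terms of the Ising variables. The edge term uses the identity
\begin{equation}
\delta(\sigma_u,\sigma_v) = \frac{1+s_u s_v}{2}, \notag
\end{equation}
which holds because the right-hand side is $1$ when $s_u=s_v$ and $0$ when $s_u=-s_v$. The single-argument delta $\delta(\sigma_v)$, interpreted as the indicator of $\sigma_v=0$, satisfies
\begin{equation}
\delta(\sigma_v) = \frac{1+s_v}{2}. \notag
\end{equation}
Substituting these two identities into the exponent of $w_G(\sigma)$ splits it into a configuration-independent piece $\tfrac{1}{2}\sum_{e\in E}\omega_e+\tfrac{1}{2}\sum_{v\in V}\upsilon_v$ and a configuration-dependent piece $\tfrac{1}{2}\sum_{\{u,v\}\in E}\omega_{\{u,v\}}s_u s_v+\tfrac{1}{2}\sum_{v\in V}\upsilon_v s_v$. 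Exponentiating, the former is precisely $w_G$, and the latter is exactly the Ising weight of the configuration $s$ with edge weights $\omega_e/2$ and vertex weights $\upsilon_v/2$.

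Finally, pulling the global factor $w_G$ out of the sum and reindexing by $s$ via the bijection above yields
\begin{equation}
\mathrm{Z}_\mathrm{Potts}(G;2,\Omega,\Upsilon) = w_G\sum_{s\in\{-1,+1\}^V}\exp\!\left(\sum_{\{u,v\}\in E}\tfrac{\omega_{\{u,v\}}}{2}s_u s_v+\sum_{v\in V}\tfrac{\upsilon_v}{2}s_v\right), \notag
\end{equation}
which is exactly $w_G\mathrm{Z}_\mathrm{Ising}(G;\Omega/2,\Upsilon/2)$.

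There is no real obstacle here; the proof is essentially the standard linearization of $\delta$ in terms of $\pm 1$ spins. The only point that requires a brief comment is fixing the convention $\delta(\sigma_v)=\delta_{\sigma_v,0}$ for the single-argument delta used in the external-field term, so that the factor of $\tfrac{1}{2}\sum_v \upsilon_v$ is produced with the correct sign to match the definition of $w_G$ in the statement.
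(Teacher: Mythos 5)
Your proof is correct: the substitution $s_v=1-2\sigma_v$ together with the identities $\delta(\sigma_u,\sigma_v)=\tfrac{1+s_us_v}{2}$ and $\delta(\sigma_v)=\tfrac{1+s_v}{2}$ is precisely the ``simple algebra'' the paper invokes, splitting the exponent into the constant $w_G$ and the Ising weight with halved couplings. Nothing is missing, and your remark about fixing the convention $\delta(\sigma_v)=\delta_{\sigma_v,0}$ is the right point to make explicit.
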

\begin{proof}
    The proof follows from some simple algebra.
\end{proof}

\section{Instantaneous Quantum Polynomial Time}
\label{section:InstantaneousQuantumPolynomialTime}

We shall now briefly introduce the class of commuting quantum circuits, known as \emph{Instantaneous Quantum Polynomial-time} (IQP) circuits~\cite{shepherd2009temporally}. These circuits exhibit many interesting mathematical properties. In particular, the output probability amplitudes of IQP circuits are proportional to evaluations of the Tutte polynomial of binary matroids~\cite{shepherd2010binary}. IQP circuits comprise only gates that are diagonal in the \mbox{Pauli-X} basis and are described by an \emph{X-program}.
\begin{definition}[X-program]
    An X-program is a pair \mbox{$(P,\theta)$}, where \mbox{$P=(p_{ij})_{m \times n}$} is a binary matrix and \mbox{$\theta\in[-\pi,\pi]$} is a real angle. The matrix $P$ is used to construct a Hamiltonian of $m$ commuting terms acting on $n$ qubits, where each term in the Hamiltonian is a product of Pauli-X operators,
    \begin{equation}
        \mathrm{H}_{(P,\theta)} \coloneqq -\theta\sum_{i=1}^m\bigotimes_{j=1}^nX_j^{p_{ij}}. \notag
    \end{equation}
    Thus, the columns of $P$ correspond to qubits and the rows of $P$ correspond to interactions in the Hamiltonian.
\end{definition}

An X-program induces a probability distribution $\mathcal{P}_{(P,\theta)}$ known as an \emph{IQP distribution}.
\begin{definition}[$\mathcal{P}_{(P,\theta)}$]
    For an X-program \mbox{$(P,\theta)$} with \mbox{$P=(p_{ij})_{m \times n}$}, we define $\mathcal{P}_{(P,\theta)}$ to be the probability distribution over binary strings \mbox{$x\in\{0,1\}^n$}, given by
    \begin{equation}
        \textbf{Pr}[x] \coloneqq \abs{\psi_{(P,\theta)}(x)}^2, \notag
    \end{equation}
    where
    \begin{equation}
        \psi_{(P,\theta)}(x) = \bra{x}\exp\left(-i\mathrm{H}_{(P,\theta)}\right)\ket{0^n}. \notag
    \end{equation}
\end{definition}

The principal probability amplitude \mbox{$\psi_{(P,\theta)}(0^n)$} of an IQP distribution is directly related to an evaluation of the Tutte polynomial of the binary matroid whose ground set is the row space of $P$.
\begin{proposition}
    \label{proposition:IQPTuttePolynomialRelation}
    Let \mbox{$(P,\theta)$} be an X-program with \mbox{$P=(p_{ij})_{m \times n}$}. Let \mbox{$M=(\mathcal{S},\mathcal{I})$} be the binary matroid whose ground set $\mathcal{S}$ is the row space of $P$, then,
    \begin{equation}
        \psi_{(P,\theta)}(0) = e^{i\theta(r(M)-m)}(i\sin(\theta))^{r(M)}\mathrm{T}\left(M;x,y\right), \notag
    \end{equation}
    where \mbox{$x=-i\cot(\theta)$} and \mbox{$y=e^{2i\theta}$}.
\end{proposition}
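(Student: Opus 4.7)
The plan is to expand the commuting exponential, pick out the $\bra{0^n}\cdot\ket{0^n}$ matrix element to obtain a sum over the cycle space of $M$, and then algebraically verify that this coincides with the stated Tutte polynomial evaluation.

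First, I would exploit the commutativity of the terms in $\mathrm{H}_{(P,\theta)}$: each $X^{p_i} \equiv \bigotimes_j X_j^{p_{ij}}$ is Hermitian and squares to the identity, and all such terms mutually commute. Hence $\exp(-i\mathrm{H}_{(P,\theta)}) = \prod_{i=1}^m(\cos\theta\,I + i\sin\theta\,X^{p_i})$, which when multiplied out gives $\sum_{A \subseteq [m]}(\cos\theta)^{m-|A|}(i\sin\theta)^{|A|}X^{\oplus_{i \in A}p_i}$. Taking the $\bra{0^n}\cdot\ket{0^n}$ matrix element, the identity $\bra{0^n}X^v\ket{0^n} = \delta_{v,0}$ restricts the sum to those $A$ with $\oplus_{i \in A}p_i = 0$, which are precisely the cycles of the binary matroid $M$. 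Therefore
\begin{equation*}
\psi_{(P,\theta)}(0^n) = \sum_{A \in \mathcal{C}(M)} (\cos\theta)^{m-|A|}(i\sin\theta)^{|A|},
\end{equation*}
where $\mathcal{C}(M)$ denotes the cycle space.

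Second, I would identify this cycle-space sum with the right-hand side of the proposition by direct computation. Substituting $x = -i\cot\theta$ and $y = e^{2i\theta}$ into the Whitney-rank expansion of $\mathrm{T}(M;x,y)$ one obtains $x-1 = -ie^{-i\theta}/\sin\theta$ and $y-1 = 2ie^{i\theta}\sin\theta$, so that after combining with the prefactor $e^{i\theta(r(M)-m)}(i\sin\theta)^{r(M)}$ each summand collapses (via cancellations among the complex phases and powers of $\sin\theta$) to the single weight $2^{|A|-r(A)}(i\sin\theta)^{|A|}e^{i\theta(|A|-m)}$. I would then apply the key binary-matroid identity $2^{|A|-r(A)} = \bigl|\{B \subseteq A : B \in \mathcal{C}(M)\}\bigr|$, which holds because the restriction $M|_A$ has cycle space of dimension $|A|-r(A)$, to swap the order of summation and reindex via $A = B \sqcup C$ with $C \subseteq [m]\setminus B$. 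The $C$-sum factorises as $(1 + i\sin\theta\,e^{i\theta})^{m-|B|} = (\cos\theta\,e^{i\theta})^{m-|B|}$ by the elementary identity $1 + i\sin\theta\,e^{i\theta} = \cos\theta\,e^{i\theta}$, and the residual phase $e^{i\theta(|B|-m)}$ cancels against $e^{i\theta(m-|B|)}$, leaving exactly the cycle-space expression for $\psi_{(P,\theta)}(0^n)$ derived above.

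The main obstacle is the bookkeeping in the algebraic manipulation: verifying that the complex phases, signs, and powers of $\sin\theta$ produced by the substitutions into the Tutte polynomial conspire with the overall prefactor to yield precisely $2^{|A|-r(A)}(i\sin\theta)^{|A|}e^{i\theta(|A|-m)}$ per summand. The single conceptual input beyond this routine computation is the cycle-count identity for binary matroids, which is exactly what enables the reindexing that converts the Tutte sum over arbitrary $A$ into the cycle-space sum recovered from the quantum amplitude.
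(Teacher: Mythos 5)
Your proof is correct: the expansion $\exp(-i\mathrm{H}_{(P,\theta)})=\prod_i(\cos\theta\,I+i\sin\theta\,X^{p_i})$, the restriction of the $\bra{0^n}\cdot\ket{0^n}$ element to subsets of rows summing to zero over $\mathbb{F}_2$, the substitutions $x-1=-ie^{-i\theta}/\sin\theta$ and $y-1=2ie^{i\theta}\sin\theta$, the per-summand weight $2^{\abs{A}-r(A)}(i\sin\theta)^{\abs{A}}e^{i\theta(\abs{A}-m)}$, the cycle-count identity $2^{\abs{A}-r(A)}$, and the factorisation $1+i\sin\theta\,e^{i\theta}=\cos\theta\,e^{i\theta}$ all check out (implicitly assuming $\sin\theta\neq 0$, and reading the ground set as the $m$ rows of $P$ rather than literally their span, which is clearly the intended meaning). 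Note, however, that the paper gives no proof of this proposition at all; it defers to Shepherd's paper, where the result is obtained by a different route: passing to the Pauli-X eigenbasis so that the amplitude becomes (up to a phase) the weight enumerator of the binary code generated by $P$, evaluated at $e^{-2i\theta}$, and then invoking the Greene-type correspondence between code weight enumerators and the Tutte polynomial of the associated binary matroid. Your argument is a self-contained alternative that in effect reproves the needed specialisation of Greene's identity from scratch via the subset/cycle-space double counting, which is more elementary; the weight-enumerator route buys a cleaner treatment of the other amplitudes $\psi_{(P,\theta)}(x)$ and of general $\theta$, which is why the paper cites it for that broader discussion.
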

A similar result may be obtained for the other probability amplitudes. This can easily be seen when \mbox{$\theta=\frac{\pi}{2k}$} for \mbox{$k\in\mathbb{Z}^+$}, by firstly letting \mbox{$P\|^kx$} be the matrix obtained from $P$ by appending $k$ rows identical to $x$, and then observing that \mbox{$\psi_{(P,\theta)}(x)=-i\psi_{(P\|^kx,\theta)}(0^n)$}. For a proof of \mbox{Proposition~\ref{proposition:IQPTuttePolynomialRelation}} and a treatment of the general $\theta$ case, we refer the reader to Ref.~\cite[Section 3]{shepherd2010binary}.

We shall consider X-programs that are induced by a weighted graph.
\begin{definition}[Graph-induced X-program]
    For a graph \mbox{$G=(V,E)$} with the weights \mbox{$\left\{\omega_e\in[-\pi,\pi]\right\}_{e \in E}$} assigned to its edges and the weights \mbox{$\left\{\upsilon_v\in[-\pi,\pi]\right\}_{v \in V}$} assigned to its vertices, we define the X-program induced by $G$ to be an X-program $\mathcal{X}_G$ such that
    \begin{equation}
        \mathrm{H}_{\mathcal{X}_G} = -\sum_{\{u,v\} \in E}\omega_{\{u,v\}}X_uX_v-\sum_{v \in V}\upsilon_vX_v. \notag
    \end{equation}
\end{definition}
Any X-program can be efficiently represented by a graph-induced X-program~\cite{shepherd2009temporally}. The principal probability amplitude \mbox{$\psi_{\mathcal{X}_G}(0^n)$} of the IQP distribution generated by a graph-induced X-program is directly related to the Ising model partition function of the graph with imaginary weights.
\begin{proposition}[{restate=[name=restatement]IQPIsingModelPartitionFunctionRelation}]
    \label{proposition:IQPIsingModelPartitionFunctionRelation}
    Let \mbox{$G=(V,E)$} be a graph with the weights \mbox{$\Omega=\left\{\omega_e\in[-\pi,\pi]\right\}_{e \in E}$} assigned to its edges and the weights \mbox{$\Upsilon=\left\{\upsilon_v\in[-\pi,\pi]\right\}_{v \in V}$} assigned to its vertices, then,
    \begin{equation}
        \psi_{\mathcal{X}_G}\left(0^{\abs{V}}\right) = \frac{1}{2^{\abs{V}}}\mathrm{Z}_\mathrm{Ising}(G;i\Omega,i\Upsilon). \notag
    \end{equation}
\end{proposition}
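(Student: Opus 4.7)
The plan is to exploit the fact that $H_{\mathcal{X}_G}$, as a real linear combination of tensor products of Pauli-$X$ operators, is already diagonal in the common eigenbasis of the $X_v$'s, which lets me evaluate the matrix element $\langle 0^{|V|}| \exp(-iH_{\mathcal{X}_G}) | 0^{|V|}\rangle$ by a direct sum over configurations.

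First, I would introduce, for each vertex $v$, the Pauli-$X$ eigenbasis $\{\ket{+}_v,\ket{-}_v\}$ with eigenvalues $+1$ and $-1$ respectively, and for a configuration $\sigma\in\{-1,+1\}^V$ write $\ket{\sigma}\coloneqq\bigotimes_v\ket{\sigma_v}_v$. The operator $X_uX_v$ has eigenvalue $\sigma_u\sigma_v$ on $\ket{\sigma}$ and $X_v$ has eigenvalue $\sigma_v$, so
\begin{equation}
    H_{\mathcal{X}_G}\ket{\sigma} = -\left(\sum_{\{u,v\}\in E}\omega_{\{u,v\}}\sigma_u\sigma_v + \sum_{v\in V}\upsilon_v\sigma_v\right)\ket{\sigma}. \notag
\end{equation}
Exponentiation is then immediate term-by-term, giving $\exp(-iH_{\mathcal{X}_G})\ket{\sigma} = w_G(\sigma)\ket{\sigma}$ where $w_G(\sigma)$ is the Ising Boltzmann weight at imaginary couplings $i\Omega$ and imaginary fields $i\Upsilon$.

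Next, I would expand the computational basis state in the $X$-eigenbasis using $\ket{0} = \frac{1}{\sqrt{2}}(\ket{+}+\ket{-})$, yielding
\begin{equation}
    \ket{0^{|V|}} = \frac{1}{2^{|V|/2}}\sum_{\sigma\in\{-1,+1\}^V}\ket{\sigma}. \notag
\end{equation}
Plugging this into $\psi_{\mathcal{X}_G}(0^{|V|}) = \bra{0^{|V|}}\exp(-iH_{\mathcal{X}_G})\ket{0^{|V|}}$, using orthonormality of the $\ket{\sigma}$'s, and collapsing the double sum by the Kronecker delta gives
\begin{equation}
    \psi_{\mathcal{X}_G}(0^{|V|}) = \frac{1}{2^{|V|}}\sum_{\sigma\in\{-1,+1\}^V}w_G(\sigma), \notag
\end{equation}
which is exactly $\frac{1}{2^{|V|}}\mathrm{Z}_{\mathrm{Ising}}(G;i\Omega,i\Upsilon)$ by the definition of the Ising model partition function with couplings $i\Omega$ and fields $i\Upsilon$.

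There is no substantive obstacle here: the proof is essentially a basis change exploiting that $H_{\mathcal{X}_G}$ is diagonal in the $X$-basis while $\ket{0^{|V|}}$ is the uniform superposition over that basis. The only points to be careful about are the bookkeeping of signs (the leading minus in $H_{\mathcal{X}_G}$ combines with the $-i$ in the exponential to give the $+i$ that produces the imaginary couplings in the Ising weight) and the overall prefactor of $2^{-|V|}$, which arises from squaring the normalization $2^{-|V|/2}$ of the uniform superposition.
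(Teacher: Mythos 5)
Your proof is correct and is essentially the same argument as the paper's: both evaluate the amplitude by diagonalizing $\mathrm{H}_{\mathcal{X}_G}$ and expanding $\ket{0^{\abs{V}}}$ as a uniform superposition over the diagonal basis, the only cosmetic difference being that the paper first conjugates by Hadamards (sending $X\to Z$ and $\ket{0}\to\ket{+}$) and then works in the computational basis, whereas you stay directly in the Pauli-$X$ eigenbasis. The sign bookkeeping and the $2^{-\abs{V}}$ prefactor in your write-up match the paper's derivation.
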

\mbox{Proposition~\ref{proposition:IQPIsingModelPartitionFunctionRelation}} is well known~\cite{iblisdir2014low, fujii2017commuting}; we provide a proof in \mbox{Appendix~\ref{section:IQPIsingModelPartitionFunctionRelation}}. It will be convenient to consider graph-induced X-programs \mbox{$\mathcal{X}_{G(\theta)}$} with weights that are all positive integer multiples of a real angle $\theta$. As in \mbox{Section~\ref{section:TuttePolynomial}}, this model can be implemented on the augmented graph \mbox{$G'=(V',E')$} with all weights equal to $\theta$ by replacing each edge with the appropriate number of parallel edges. Let us denote the graph-induced X-program of this model by \mbox{$\mathcal{X}_{G'(\theta)}$}. Then, we have the following proposition.
\begin{proposition}[{restate=[name=restatement]IQPAugmentedGraphRelation}]
    \label{proposition:IQPAugmentedGraphRelation}
    \begin{equation}
        \psi_{\mathcal{X}_{G(\theta)}}\left(0^{\abs{V}}\right) = \psi_{\mathcal{X}_{G'(\theta)}}\left(0^{\abs{V'}}\right). \notag
    \end{equation}
\end{proposition}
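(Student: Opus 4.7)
The plan is to prove the equality by directly manipulating the time-evolution operator, exploiting the fact that every term in $\mathrm{H}_{\mathcal{X}_{G(\theta)}}$ is a tensor product of Pauli-$X$ operators and hence the terms mutually commute. This lets the exponential factorise cleanly over the edges and vertices of $G$, so that
\begin{equation}
    \exp(-i\mathrm{H}_{\mathcal{X}_{G(\theta)}}) = \prod_{\{u,v\} \in E}\exp(i\omega_{\{u,v\}}X_uX_v)\prod_{v \in V}\exp(i\upsilon_v X_v). \notag
\end{equation}

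Next, I would invoke the defining hypothesis of $\mathcal{X}_{G(\theta)}$, namely that every weight is a positive integer multiple of $\theta$. Writing $\omega_e = k_e\theta$ and $\upsilon_v = k_v\theta$ with $k_e,k_v\in\mathbb{Z}^+$, each exponential is a pure integer power of the elementary factor of weight $\theta$:
\begin{equation}
    \exp(ik_e\theta X_uX_v) = \bigl(\exp(i\theta X_uX_v)\bigr)^{k_e}, \notag
\end{equation}
and likewise for the single-vertex terms. Expanding these powers as products of $k_e$ (respectively $k_v$) identical factors reproduces exactly the Hamiltonian exponential of the graph-induced X-program on the augmented graph $G'$, in which each edge $e$ of $G$ has been replaced by $k_e$ parallel edges of weight $\theta$ (and vertex weights are handled analogously). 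Hence $\exp(-i\mathrm{H}_{\mathcal{X}_{G(\theta)}}) = \exp(-i\mathrm{H}_{\mathcal{X}_{G'(\theta)}})$ as operators on the qubits indexed by $V=V'$.

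Finally, I would sandwich this operator identity between $\bra{0^{\abs{V}}}$ and $\ket{0^{\abs{V}}}$ (equivalently $\ket{0^{\abs{V'}}}$, since the augmentation does not introduce new vertices), which directly yields the claimed equality of principal probability amplitudes. The only real content in the argument is the commutativity-plus-factorisation observation; the remaining manipulation is bookkeeping. The main obstacle, such as it is, is simply being careful that the edge-multiplicity construction for $G'$ is matched precisely to the integer multiplicities $k_e$ and $k_v$, so that no spurious scalar factor is introduced when passing between the two operator expressions.
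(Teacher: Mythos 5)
There is a genuine gap, and it stems from a misreading of what the augmented graph $G'$ is. In this paper $G'$ is \emph{not} just $G$ with each weighted edge split into parallel edges of weight $\theta$: as defined in Section~\ref{section:PottsModelPartitionFunction} (and reused in Section~\ref{section:InstantaneousQuantumPolynomialTime}), one adds a \emph{new} vertex $u_i$ for every connected component $C_i$ of $G$ and replaces each vertex weight $\upsilon_v$ by an edge $\{u_i,v\}$ carrying that weight; only then are all weighted edges split into parallel $\theta$-edges. Consequently $G'$ has no vertex weights at all and $\abs{V'}=\abs{V}+\kappa(E)$, so your central claim --- that $\exp(-i\mathrm{H}_{\mathcal{X}_{G(\theta)}})=\exp(-i\mathrm{H}_{\mathcal{X}_{G'(\theta)}})$ as operators on qubits indexed by $V=V'$, and that ``the augmentation does not introduce new vertices'' --- is false: the two Hamiltonians do not even act on the same number of qubits, and $\mathrm{H}_{\mathcal{X}_{G'(\theta)}}$ contains no single-qubit $X_v$ terms while $\mathrm{H}_{\mathcal{X}_{G(\theta)}}$ does. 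Your commutativity-and-factorisation step correctly disposes of the easy part (an edge of weight $k_e\theta$ is the same as $k_e$ parallel $\theta$-edges), but that part alone cannot connect the amplitude to the Tutte polynomial of a graph, which is the whole purpose of the construction: the Tutte polynomial sees only edges, so the external-field terms must be converted into edges, not merely repeated.

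The missing content is precisely the statement that replacing each single-qubit term $e^{i\upsilon_v X_v}$ by a two-qubit term $e^{i\upsilon_v X_{u_i}X_v}$ coupling $v$ to a fresh ancilla qubit $u_i$ (one per connected component, all ancillas initialised in $\ket{0}$) leaves the principal amplitude unchanged. This is not bookkeeping: expanding each ancilla in the $X$ eigenbasis, the $\ket{+}$ and $\ket{-}$ branches give field terms $e^{\pm i\upsilon_v X_v}$, and one must use the global spin-flip symmetry within each component to see that both branches contribute equally, the resulting factor $2^{\kappa(E)}$ being exactly cancelled by the $2^{-\kappa(E)}$ coming from the enlarged normalisation $2^{-\abs{V'}}$. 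The paper carries this out by translating the amplitude into the Ising and then Potts partition function (Propositions~\ref{proposition:IQPIsingModelPartitionFunctionRelation} and~\ref{proposition:PottsIsingRelation}) and invoking Proposition~\ref{proposition:PottsModelPartitionFunctionExternalField}, whose proof is the $\mathbb{Z}_q$-symmetry argument just described. Either route is fine, but some version of this argument must appear; as written, your proof establishes only the trivial parallel-edge identity and does not prove the proposition.
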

We prove \mbox{Proposition~\ref{proposition:IQPAugmentedGraphRelation}} in \mbox{Appendix~\ref{section:IQPAugmentedGraphRelation}}. We also have the following proposition relating the principal probability amplitude to the Tutte polynomial of the augmented graph.
\begin{proposition}[{restate=[name=restatement]IQPAugmentedGraphTuttePolynomialRelation}]
    \label{proposition:IQPAugmentedGraphTuttePolynomialRelation}
    \begin{equation}
        \psi_{\mathcal{X}_{G'}(\theta)}\left(0^{\abs{V'}}\right) = e^{i\theta\left(r(G')-\abs{E'}\right)}(i\sin(\theta))^{r(G')}\mathrm{T}\left(G';x,y\right), \notag
    \end{equation}
    where \mbox{$x=-i\cot(\theta)$} and \mbox{$y=e^{2i\theta}$}. \notag
\end{proposition}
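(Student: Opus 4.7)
The plan is to deduce the proposition directly from Proposition~\ref{proposition:IQPTuttePolynomialRelation} by identifying the binary matroid associated with the X-program $\mathcal{X}_{G'(\theta)}$ with the cycle matroid $M(G')$. Because every edge weight of $G'$ equals $\theta$, the induced Hamiltonian is $\mathrm{H}_{\mathcal{X}_{G'(\theta)}} = -\theta \sum_{\{u,v\} \in E'} X_u X_v$, so the binary matrix $P$ encoding the X-program has one row per edge and one column per vertex of $G'$, and the row indexed by an edge $\{u,v\}$ carries exactly two $1$'s, in the columns indexed by $u$ and $v$. In particular, $m = \abs{E'}$ and $n = \abs{V'}$.

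Next I would recognise $P$ as the transpose of the unoriented incidence matrix of $G'$ viewed over $\mathbb{F}_2$. Since $-1 \equiv 1 \pmod 2$, this matrix equals the reduction modulo $2$ of the oriented incidence matrix $A_{D(G')}$ from Section~\ref{section:MatroidTheory}, which was shown there to represent the cycle matroid $M(G')$ over every field. Hence the binary matroid whose ground set consists of the rows of $P$ is isomorphic to $M(G')$; in particular its rank equals $r(G')$ and its Tutte polynomial equals $\mathrm{T}(G';x,y)$. Even when $G'$ has parallel edges (which arise when the original weights of $G$ are larger positive integer multiples of $\theta$), the identification remains consistent, since identical rows of $P$ correspond to parallel elements both in the binary matroid and in the cycle matroid of the multigraph $G'$.

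Substituting $m = \abs{E'}$ and $r(M) = r(G')$ into the formula of Proposition~\ref{proposition:IQPTuttePolynomialRelation} then yields the stated identity. The only step requiring genuine care is the matroid identification itself; once the incidence-matrix correspondence is spelled out, the remainder of the argument is a direct substitution, and I do not anticipate any further obstacles.
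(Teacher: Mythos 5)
Your proof is correct, but it follows a different route from the one the paper actually writes out in Appendix~\ref{section:IQPAugmentedGraphTuttePolynomialRelation}. There, the result is derived entirely within the statistical-mechanics chain already set up in the paper: the amplitude is rewritten as an Ising partition function with imaginary weights (Proposition~\ref{proposition:IQPIsingModelPartitionFunctionRelation}), converted to a $2$-state Potts partition function (Proposition~\ref{proposition:PottsIsingRelation}), and then expressed as a Tutte evaluation along the hyperbola $(x-1)(y-1)=2$ (Proposition~\ref{proposition:PottsModelPartitionFunctionTuttePolynomialRelation}), with the prefactors collected at the end. Your argument instead invokes Proposition~\ref{proposition:IQPTuttePolynomialRelation} directly and identifies the binary matroid of the X-program matrix $P$ with the cycle matroid $M(G')$ via the ($\bmod\ 2$ reduction of the) incidence matrix; this is precisely the alternative derivation the paper itself sketches in the remark immediately following the proposition, so it is a legitimate and anticipated route rather than an oversight. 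The trade-off: your route is shorter and makes the matroid content transparent (graphic matroids are binary, parallel edges become parallel elements, $m=\abs{E'}$, $r(M)=r(G')$), but it outsources the analytic content to Proposition~\ref{proposition:IQPTuttePolynomialRelation}, whose proof the paper only cites from Shepherd--Bremner, whereas the appendix chain is self-contained given the Potts--Tutte correspondence and keeps the bookkeeping of the constants explicit. Two details you handled correctly and should keep explicit in a final write-up: the augmented graph $G'$ carries zero vertex weights (so the induced Hamiltonian has only $X_uX_v$ terms and $P$ really is the edge--vertex incidence pattern), and the identification survives parallel edges since repeated rows of $P$ are parallel elements of both matroids.
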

We prove \mbox{Proposition~\ref{proposition:IQPAugmentedGraphTuttePolynomialRelation}} in \mbox{Appendix~\ref{section:IQPAugmentedGraphTuttePolynomialRelation}}. Notice that if we let \mbox{$M=(\mathcal{S},\mathcal{I})$} be the binary matroid whose ground set $\mathcal{S}$ is the column space of the orientated incidence matrix $A_{D(G')}$ of $G'$ with an arbitrary orientation $D(G')$ assigned to it, then we can use \mbox{Proposition~\ref{proposition:IQPTuttePolynomialRelation}} to obtain \mbox{Proposition~\ref{proposition:IQPAugmentedGraphTuttePolynomialRelation}}.

\section{Quantum Computation and the Tutte Polynomial}
\label{section:QuantumComputationAndTheTuttePolynomial}

In this section we show that quantum probability amplitudes may be expressed in terms of the evaluation of a Tutte polynomial. We achieve this by showing that output probability amplitudes of a class of universal quantum circuits are proportional to the principal probability amplitude of some IQP circuit.

It will be convenient to define the following gate set.
\begin{definition}[$\mathcal{G}_\theta$]
    For a real angle \mbox{$\theta\in[-\pi,\pi]$}, we define $\mathcal{G}(\theta)$ to be the gate set
    \begin{equation}
        \mathcal{G}_\theta \coloneqq \{H, e^{i\theta X}, e^{i\theta XX}\}, \notag
    \end{equation}
    where $H$ denotes the Hadamard gate.
\end{definition}
It is easy to see that the gate set $\mathcal{G}_{\frac{\pi}{4}}$ generates the Clifford group and the gate set $\mathcal{G}_{\frac{\pi}{8}}$ is universal for quantum computation.

In the IQP model it is easy to implement the gates $e^{i\theta X}$ and $e^{i\theta XX}$. So in order to implement the entire gate set $\mathcal{G}_\theta$, it remains to show that we can implement the Hadamard gate. This can be achieved by the use of postselection when \mbox{$\theta=\frac{\pi}{4k}$} for \mbox{$k\in\mathbb{Z}^+$}~\cite{bremner2010classical}. To apply a Hadamard gate to the target state $\ket{\alpha}_t$ consider the following Hadamard gadget. Firstly introduce an ancilla qubit in the state $\ket{0}_a$ and apply the gate \mbox{$e^{\frac{i\pi}{4}(\mathbb{I}-X)_t(\mathbb{I}-X)_a}$} to \mbox{$\ket{\alpha}_t\ket{0}_a$}. Then measure qubit $t$ in the computational basis and postselect on an outcome of $0$. The output state of this gadget is then \mbox{$H\ket{\alpha}_a$}.

We shall consider quantum circuits that comprise gates from the set $\mathcal{G}_{\frac{\pi}{4k}}$ for an integer \mbox{$k\in\mathbb{Z}^+$}. Let $C_{k,n,m}$ denote such a circuit that acts on $n$ qubits and comprises $m$ Hadamard gates. Further let $\mathcal{X}_G(C_{k,n,m})$ denote the graph-induced X-program that implements the circuit $C_{k,n,m}$ by replacing each of the $m$ Hadamard gates with the Hadamard gadget. Then we have the following proposition.
\begin{proposition}
    \label{proposition:QuantumCircuitAmplitudeTuttePolynomialRelation}
    \begin{equation}
        \bra{0^n}C_{k,n,m}\ket{0^n} = \sqrt{2}^m\psi_{\mathcal{X}_G(C_{k,n,m})}\left(0^{n+m}\right). \notag
    \end{equation}
\end{proposition}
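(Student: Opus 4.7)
The plan is to verify that the Hadamard gadget correctly implements a Hadamard gate up to a known normalisation, and then to iterate this substitution over all $m$ Hadamards in $C_{k,n,m}$, carefully tracking the cumulative factor. The final amplitude on $n+m$ qubits is $\ket{0^{n+m}}$, where the first $m$ zeros correspond to the postselected-on-$0$ measurement outcomes of the gadgets and the remaining $n$ zeros to the output of the original circuit (on relabelled wires).

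First I would analyse the gadget in isolation. For an arbitrary target state $\ket{\alpha}_t$ and fresh ancilla $\ket{0}_a$, I want to compute $(\bra{0}_t\otimes\mathbb{I}_a)\,e^{\frac{i\pi}{4}(\mathbb{I}-X)_t(\mathbb{I}-X)_a}\,(\ket{\alpha}_t\otimes\ket{0}_a)$. The cleanest approach is to note that $(\mathbb{I}-X)_t(\mathbb{I}-X)_a$ is diagonal in the Pauli-$X$ eigenbasis $\{\ket{++},\ket{+-},\ket{-+},\ket{--}\}$ with eigenvalues $\{0,0,0,4\}$, so the gadget gate reduces to $\mathbb{I}-2\ket{-}\bra{-}_t\otimes\ket{-}\bra{-}_a$. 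Expanding $\ket{\alpha}_t\ket{0}_a$ in this basis and taking the inner product with $\ket{0}_t$ then collapses, after a short calculation, to $\frac{1}{\sqrt{2}}H\ket{\alpha}_a$. Thus each gadget application is equivalent to applying $H$ on a fresh wire, up to a multiplicative factor of $\frac{1}{\sqrt{2}}$.

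Next I would substitute the gadget for every Hadamard in $C_{k,n,m}$. Since $(\mathbb{I}-X)_t(\mathbb{I}-X)_a=\mathbb{I}-X_t-X_a+X_tX_a$ is a sum of mutually commuting Pauli-$X$ products, the gadget gate factorises as $e^{i\pi/4}\,e^{-i\pi/4\,X_t}\,e^{-i\pi/4\,X_a}\,e^{i\pi/4\,X_tX_a}$, each factor being a $k$-fold power of a gate in $\mathcal{G}_{\pi/(4k)}$; the remaining gates of $C_{k,n,m}$ are already diagonal in the Pauli-$X$ basis. Hence the substituted circuit can be written as a graph-induced X-program $\mathcal{X}_G(C_{k,n,m})$ on $n+m$ qubits. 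Composing the single-gadget identity $m$ times and absorbing the overall phase into the graph weights gives
\begin{equation}
\psi_{\mathcal{X}_G(C_{k,n,m})}(0^{n+m}) = \left(\frac{1}{\sqrt{2}}\right)^{m}\bra{0^n}C_{k,n,m}\ket{0^n}, \notag
\end{equation}
which upon rearrangement is the stated identity.

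The only real obstacle is bookkeeping: one has to pin down, at each gadget, which wire carries the ``live'' logical state after the substitution (the newly introduced ancilla $a$) and which is postselected (the original target $t$), so that the $n+m$ coordinates of the output string $0^{n+m}$ are assigned consistently across all $m$ substitutions. Once this relabelling is fixed, the identity follows by a straightforward induction on the number of Hadamards already processed, with the gadget calculation of the first step serving as the inductive step.
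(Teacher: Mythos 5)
Your proposal is correct and takes essentially the same route as the paper, whose proof is the one-line assertion that the identity follows from applying the Hadamard gadgets; your explicit gadget computation (reducing $e^{\frac{i\pi}{4}(\mathbb{I}-X)_t(\mathbb{I}-X)_a}$ to $\mathbb{I}-2\ket{-}\bra{-}_t\otimes\ket{-}\bra{-}_a$ and extracting the factor $\tfrac{1}{\sqrt{2}}$ per postselection) simply supplies the detail the paper omits. The only loose point is the remark that the per-gadget global phase $e^{i\pi/4}$ is ``absorbed into the graph weights'': edge and vertex weights alone cannot generate a global phase (the gates $e^{i\theta X}$ and $e^{i\theta XX}$ have unit determinant), so strictly this phase comes from the identity term of the gadget gate itself and must be tracked via, e.g., a loop or an all-zero row of $P$ --- a bookkeeping detail the paper likewise glosses over.
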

\begin{proof}
    The proof follows immediately from the application of the Hadamard gadgets.
\end{proof}

Any quantum amplitude may therefore be expressed as the evaluation of a Tutte polynomial by \mbox{Proposition~\ref{proposition:IQPAugmentedGraphRelation}}, \mbox{Proposition~\ref{proposition:IQPAugmentedGraphTuttePolynomialRelation}}, and \mbox{Proposition~\ref{proposition:QuantumCircuitAmplitudeTuttePolynomialRelation}}.

\section{Efficient Classical Simulation of Clifford Circuits}
\label{section:EfficientClassicalSimulationCliffordCircuits}

In this section we show how the correspondence between quantum computation and evaluations of the Tutte polynomial provides an explicit form for Clifford circuit amplitudes in terms of matroid invariants; namely, the \emph{bicycle dimension} and \emph{Brown's invariant}. This gives rise to an efficient classical algorithm for computing the output probability amplitudes of Clifford circuits. We note that it was first observed by Shepherd~\cite{shepherd2010binary} that to compute the probability amplitude of a Clifford circuit, it is sufficient to evaluate the Tutte polynomial of a binary matroid at the point $(x, y)$ equals $(-i, i)$, which can be efficiently computed by Vertigan's algorithm~\cite{vertigan1998bicycle}. We proceed with some definitions.

Let $V$ be a linear subspace of $\mathbb{F}_2^n$. The bicycle dimension and Brown's invariant are defined as follows.
\begin{definition}[Bicycle dimension]
    The bicycle dimension of $V$ is defined by
    \begin{equation}
        d(V) \coloneqq \dim(V \cap V^\perp). \notag
    \end{equation}
\end{definition}
\begin{definition}[Brown's invariant]
    If $\abs{\mathrm{supp}(x)} \equiv 0 \pmod{4}$ for all $x \in V \cap V^\perp$, then Brown's invariant $\sigma(V)$ is defined to be the smallest integer such that
    \begin{equation}
        \sum_{x \in V}i^{\abs{\mathrm{supp}(x)}} = \sqrt{2}^{d(V)+\dim(V)}e^{\frac{i\pi}{4}\sigma(V)}. \notag
    \end{equation}
\end{definition}

The following theorem of Pendavingh~\cite{pendavingh2014evaluation} provides an explicit form for the Tutte polynomial of a binary matroid at $(-i, i)$ in terms of the bicycle dimension and Brown's invariant.
\begin{theorem}[Pendavingh~\cite{pendavingh2014evaluation}]
    \label{theorem:TuttePolynomialExplicitForm}
    Let $V$ be a linear subspace of $\mathbb{F}_2^{\mathcal{S}}$ and let $M(V)$ be the corresponding binary matroid with ground set $\mathcal{S}$. If \mbox{$\abs{\mathrm{supp}(x)} \equiv 0 \pmod{4}$} for all \mbox{$x \in V \cap V^\perp$}, then,
    \begin{equation}
        \mathrm{T}(M(V);-i,i) = \sqrt{2}^{d(V)}e^{\frac{i\pi}{4}(2\abs{S}-3r(M)-\sigma(V))}. \notag
    \end{equation}
    Otherwise, $\mathrm{T}(M(V);-i,i)=0$. Further, $\mathrm{T}(M(V);-i,i)$ can be evaluated in polynomial time.
\end{theorem}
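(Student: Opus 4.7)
My plan is to reduce the evaluation \mbox{$\mathrm{T}(M(V); -i, i)$} to a weighted character sum over codewords of $V$, use translation by bicycle vectors to establish the zero/nonzero dichotomy, and in the nonzero case recognise the resulting sum via the very definition of Brown's invariant.

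First, I would appeal to Greene's theorem, which relates \mbox{$\mathrm{T}(M(V); x, y)$} to the Hamming weight enumerator \mbox{$W_V(x, y) = \sum_{v \in V} x^{\abs{\mathcal{S}} - \abs{\mathrm{supp}(v)}} y^{\abs{\mathrm{supp}(v)}}$}. The specialisation \mbox{$(x, y) = (i, 1)$} sends the Greene substitution \mbox{$((x+y)/(x-y), x/y)$} to \mbox{$(-i, i)$}, so
\begin{equation}
    \mathrm{T}(M(V); -i, i) = \frac{i^{\abs{\mathcal{S}}}}{(i-1)^{r(M)}} \sum_{v \in V} (-i)^{\abs{\mathrm{supp}(v)}}. \notag
\end{equation}
The task thus reduces to evaluating the character sum \mbox{$\Sigma(V) \coloneqq \sum_{v \in V} i^{\abs{\mathrm{supp}(v)}}$} and its complex conjugate.

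Next, I would establish the dichotomy by translating $V$ against its bicycle. For any \mbox{$u \in V \cap V^\perp$} and \mbox{$v \in V$}, the elementary identity \mbox{$\abs{\mathrm{supp}(u+v)} = \abs{\mathrm{supp}(u)} + \abs{\mathrm{supp}(v)} - 2\abs{\mathrm{supp}(u) \cap \mathrm{supp}(v)}$}, combined with \mbox{$u \cdot v \equiv 0 \pmod 2$} (which holds since \mbox{$u \in V^\perp$}), gives
\begin{equation}
    \Sigma(V) = \sum_{v \in V} i^{\abs{\mathrm{supp}(v+u)}} = i^{\abs{\mathrm{supp}(u)}} \Sigma(V). \notag
\end{equation}
Consequently \mbox{$\Sigma(V) = 0$} whenever some bicycle vector $u$ fails the congruence \mbox{$\abs{\mathrm{supp}(u)} \equiv 0 \pmod 4$}, which is the zero clause of the theorem. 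In the complementary case, the very definition of Brown's invariant identifies \mbox{$\Sigma(V) = \sqrt{2}^{d(V) + \dim V} e^{i\pi \sigma(V)/4}$}; conjugating and substituting into the Greene expression, using \mbox{$i - 1 = \sqrt{2}\,e^{3i\pi/4}$} and \mbox{$i^{\abs{\mathcal{S}}} = e^{i\pi \abs{\mathcal{S}}/2}$}, yields the asserted closed form after a routine phase calculation.

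For the polynomial-time statement, \mbox{$d(V)$} reduces to a rank computation over $\mathbb{F}_2$. Computing \mbox{$\sigma(V)$} without enumerating the $2^{\dim V}$ codewords is the substantive algorithmic step: the strategy is to split $V$ along its bicycle subspace and diagonalise the induced form \mbox{$v \mapsto \abs{\mathrm{supp}(v)} \pmod 8$} on a complement through successive rank-one reductions, each of which contributes a known Gauss sum to \mbox{$\sigma(V)$}. I expect this to be the principal obstacle: the quadratic form is valued modulo $8$ while the underlying vector space is over $\mathbb{F}_2$, so some care is required to verify that each reduction step is well-defined, preserves the relevant invariants, and composes to give the correct overall phase in polynomial time.
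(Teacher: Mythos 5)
The paper does not prove this statement at all: it is imported verbatim from Pendavingh, so there is no internal proof to compare against. Your route --- Greene's theorem to convert $\mathrm{T}(M(V);-i,i)$ into the weight enumerator at $(X,Y)=(i,1)$, translation by bicycle vectors for the vanishing dichotomy, and the definition of Brown's invariant for the nonzero case --- is essentially the standard argument (and the one underlying Pendavingh's result), and your bookkeeping checks out: with $r=r(M)=\dim V$, Greene gives $\mathrm{T}(M(V);-i,i)=i^{|\mathcal{S}|}(i-1)^{-r}\,\overline{\Sigma(V)}$ with $\Sigma(V)=\sum_{v\in V}i^{|\mathrm{supp}(v)|}$; the identity $|\mathrm{supp}(u+v)|\equiv|\mathrm{supp}(u)|+|\mathrm{supp}(v)|\pmod 4$ for $u\in V\cap V^{\perp}$, $v\in V$ forces $\Sigma(V)=0$ unless all bicycle weights are $0\bmod 4$; and $i-1=\sqrt{2}\,e^{3i\pi/4}$ then yields exactly $\sqrt{2}^{\,d(V)}e^{\frac{i\pi}{4}(2|\mathcal{S}|-3r(M)-\sigma(V))}$.

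Two points are left thin. First, invoking the definition of $\sigma(V)$ presupposes that in the nonvanishing case $\Sigma(V)$ really has modulus $\sqrt{2}^{\,d(V)+\dim V}$ and phase an eighth root of unity; a self-contained proof must establish this (e.g.\ via Brown's theorem on $\mathbb{Z}_4$-valued quadratic forms, noting that $v\mapsto|\mathrm{supp}(v)|\bmod 4$ is such a form on $V$ with radical $V\cap V^{\perp}$, or by splitting $V=(V\cap V^{\perp})\oplus W$ and evaluating the nondegenerate Gauss sum on $W$). Second, the polynomial-time clause is only a strategy sketch; computing $\sigma(V)$ without enumerating $V$ is precisely the substantive algorithmic content of Pendavingh's paper (diagonalising the $\mathbb{Z}_4$-valued form in polynomial time), and as you note the mod-$8$/mod-$4$ bookkeeping there needs genuine care --- alternatively one can outsource this clause to Vertigan's algorithm, which the paper already cites for the efficient evaluation at $(-i,i)$. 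So: right approach and correct phase calculation for the explicit form, but the well-definedness of $\sigma(V)$ and the efficiency claim still need to be filled in to constitute a complete proof.
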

As an immediate consequence of \mbox{Theorem~\ref{theorem:TuttePolynomialExplicitForm}}, we obtain an explicit form for Clifford circuit amplitudes in terms the bicycle dimension and Brown's invariant of the corresponding matroid. Furthermore, we obtain an efficient classical algorithm for computing the output probability amplitudes of Clifford circuits. For similar results of this flavour see Refs.~\cite{guan2019stabilizer, gosset2020fast}.

\section{Algorithm Overview}
\label{section:AlgorithmOverview}

We shall now use the correspondence between quantum computation and evaluations of the Tutte polynomial to establish a heuristic algorithm for computing quantum probability amplitudes. To compute a probability amplitude, it is sufficient to compute the Tutte polynomial of a graphic matroid at \mbox{$x=-i\cot\left(\frac{\pi}{4k}\right)$} and \mbox{$y=e^{\frac{i\pi}{2k}}$} for an integer $k\geq2$~\cite{shepherd2009temporally, shepherd2010binary, bremner2010classical}. Our algorithm will use the deletion-contraction property to recursively compute the Tutte polynomial. At each step in the recursion, the algorithm will compute certain structural properties of the graph in order to attempt to prune the computational tree. Our algorithm can be seen an adaption of the work of Haggard, Pearce, and Royle~\cite{haggard2010computing} to special points of the Tutte plane.

We note that our approach differs from tensor network-based methods, which involve the contraction of a graph with tensors assigned to its vertices. These methods have been used to simulate quantum computations while exploiting structural properties of the graph~\cite{markov2008simulating, ogorman2019parameterization, gray2021hyper, huang2020classical, pan2021simulating}. However, our approach allows us to exploit an alternative class of structural properties. We proceed by describing the key aspects of our algorithm.

\subsection{Multigraph Deletion-Contraction Formula}
\label{section:MultigraphDeletionContractionFormula}

To improve the performance of our algorithm, we shall use the following deletion-contraction formula for multigraphs.
\begin{proposition}
    \label{proposition:MultigraphDeletionContractionFormula}
    Let \mbox{$G=(V,E)$} be a multigraph and let $e$ be a multiedge of $G$ with multiplicity $\abs{e}$. If $e$ is a loop, then
    \begin{equation}
        \mathrm{T}(G;x,y) = y^{\abs{e}}\mathrm{T}(G{\setminus}\{e\};x,y). \notag
    \end{equation}
    If $e$ is a coloop, then
    \begin{equation}
        \mathrm{T}(G;x,y) = \left(x+\sum_{i=1}^{\abs{e}-1}y^i\right)\mathrm{T}(G/\{e\};x,y). \notag
    \end{equation}
    Finally, if $e$ is neither a loop nor a coloop, then
    \begin{equation}
        \mathrm{T}(G;x,y) = \mathrm{T}(G{\setminus}\{e\};x,y)+\left(\sum_{i=0}^{\abs{e}-1}y^i\right)\mathrm{T}(G/\{e\};x,y). \notag
    \end{equation}
\end{proposition}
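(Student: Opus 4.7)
The plan is to prove each case by induction on the multiplicity $|e|$, applying the single-edge deletion-contraction property of the standard Tutte polynomial definition repeatedly, peeling off one parallel copy at a time. Let $e_1,\dots,e_{|e|}$ denote the parallel edges making up the multiedge $e$, and write $G/e$ and $G\setminus e$ for the graph obtained by contracting respectively deleting every copy in the multiedge.

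The loop case is the easiest: if the multiedge $e$ is a loop (i.e.\ all $e_i$ share a single endpoint), then every individual $e_i$ is a loop in any subgraph containing it, so the single-edge loop rule $\mathrm{T}(H;x,y)=y\,\mathrm{T}(H\setminus\{e_i\};x,y)$ applies at every stage, and iterating $|e|$ times yields the factor $y^{|e|}$.

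The neither-loop-nor-coloop case is handled by induction on $|e|$. For the inductive step, pick a single copy $e_1$; since the endpoints of $e$ are distinct, $e_1$ is not a loop, and since there is an alternate $u$--$v$ path in $G\setminus e$, removing $e_1$ still leaves such a path, so $e_1$ is not a coloop either. Applying the single-edge deletion-contraction, the contraction term $\mathrm{T}(G/\{e_1\};x,y)$ equals $y^{|e|-1}\mathrm{T}(G/e;x,y)$ because contracting $e_1$ converts the remaining $|e|-1$ parallel copies into loops at the merged vertex, which can then be stripped via the loop rule. The deletion term involves a multigraph with multiplicity $|e|-1$ whose multiedge is still neither loop nor coloop, so the induction hypothesis applies. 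The base case $|e|=1$ is the ordinary deletion-contraction identity. Summing the resulting telescoping series gives $\mathrm{T}(G\setminus e;x,y)+\bigl(\sum_{i=0}^{|e|-1}y^i\bigr)\mathrm{T}(G/e;x,y)$.

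The coloop case follows the same induction, but with one subtlety that will be the main point requiring care. When $|e|\ge 2$, a single copy $e_1$ is \emph{not} a coloop, because the other $|e|-1$ copies still connect $u$ to $v$; hence we again apply the neither-loop-nor-coloop version of the single-edge rule, and the contracted graph contributes $y^{|e|-1}\mathrm{T}(G/e;x,y)$ as before. The deletion, however, gives a multigraph whose multiedge has multiplicity $|e|-1$ but is \emph{still a coloop}, so the induction hypothesis gives the coloop formula with $|e|-1$ in place of $|e|$. When only one copy remains the base case jumps to $\mathrm{T}(G_1;x,y)=x\,\mathrm{T}(G/e;x,y)$, because that single edge is genuinely a coloop. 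Solving the recurrence $\mathrm{T}(G_k;x,y)=\mathrm{T}(G_{k-1};x,y)+y^{k-1}\mathrm{T}(G/e;x,y)$ with this base case telescopes to $\bigl(x+\sum_{i=1}^{|e|-1}y^i\bigr)\mathrm{T}(G/e;x,y)$, as claimed. Ensuring one correctly tracks when a single copy does or does not inherit coloop status from the aggregate multiedge is the only place where the argument requires attention; everything else is routine bookkeeping.
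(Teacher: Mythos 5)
Your proof is correct and follows exactly the route the paper indicates for this proposition (which it states ``can easily be proven from the deletion-contraction formula by induction'' and whose details it omits): induction on the multiplicity $\abs{e}$, peeling off single parallel copies with the ordinary loop/coloop/deletion-contraction rules and noting that contracting one copy turns the remaining copies into loops. Your care in tracking when a single copy inherits coloop status from the multiedge (only when it is the last remaining copy) is precisely the point that makes the coloop case come out as $x+\sum_{i=1}^{\abs{e}-1}y^i$, and the telescoping in the other cases checks out.
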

\mbox{Proposition~\ref{proposition:MultigraphDeletionContractionFormula}} can easily be proven from the deletion-contraction formula by induction; we omit the proof. 

If $U$ is the underlying graph of $G$, then the number of recursive calls may be bounded by \mbox{$O\left(2^{\abs{E(U)}}\right)$}. Alternatively, we may bound the number of recursive calls in terms of the number of vertices plus the number of edges \mbox{$s=\abs{V(U)}+\abs{E(U)}$} in the underlying graph. The number of recursive calls $R_s$ is then bounded by \mbox{$R_s \leq R_{s-1}+R_{s-2}$}, which is precisely the Fibonacci recurrence. Hence the number of recursive calls is bounded by \mbox{$O\left(\phi^{\abs{V(U)}+\abs{E(U)}}\right)$}, where \mbox{$\phi=\frac{1+\sqrt{5}}{2}$} is the golden ratio~\cite{wilf2002algorithms}. A careful analysis shows that the number of recursive steps is bounded by \mbox{$O\left(\tau(U)\cdot\abs{E(U)}\right)$}, where $\tau(U)$ denotes the number of spanning trees in $U$~\cite{sekine1995computing}.

At each step in the recursion, we use the multigraph deletion-contraction formula to remove all multiedges that correspond to either a loop or a coloop in the underlying graph. This process contributes a multiplicative factor to the proceeding evaluation. Notice that when $G$ is a graph whose underlying graph is a looped forest, then every edge in the underlying graph is either a loop or a coloop. Hence, we obtain the following formula for the Tutte polynomial of $G$.
\begin{corollary}
    Let \mbox{$G=(V,E)$} be a multigraph whose underlying graph $U$ is a looped forest. Further, for each edge $e$ in $U$, let $\abs{e}$ denote its multiplicity in $G$. Then,
    \begin{equation}
        \mathrm{T}(G;x,y) = \prod_{\substack{e \in E(U) \\ \mathrm{loop}}}y^{\abs{e}}\prod_{\substack{e \in E(U) \\ \mathrm{coloop}}}\left(x+\sum_{i=1}^{\abs{e}-1}y^i\right). \notag
    \end{equation}
\end{corollary}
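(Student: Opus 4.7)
The plan is to prove this by induction on $\abs{E(U)}$, the number of edges in the underlying graph $U$. The base case $\abs{E(U)} = 0$ gives $G$ the empty multigraph, for which $\mathrm{T}(G;x,y) = 1$, matching the empty product on the right-hand side.

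For the inductive step, I would pick any edge $e$ of $U$. Because $U$ is a looped forest, either $e$ is a loop in $U$, or $e$ is a non-loop edge in a forest, hence a bridge. In the bridge case, every copy of the multiedge $e$ is a coloop of $M(G)$ since any spanning tree must use a copy of $e$ to connect its endpoints, and removing all parallel copies disconnects the graph. I would then invoke \mbox{Proposition~\ref{proposition:MultigraphDeletionContractionFormula}} to peel off the contribution of $e$ as either $y^{\abs{e}}\mathrm{T}(G{\setminus}\{e\};x,y)$ or $\left(x+\sum_{i=1}^{\abs{e}-1}y^i\right)\mathrm{T}(G/\{e\};x,y)$, matching exactly the factor appearing in the desired product.

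The remaining thing to check — and the only real obstacle — is that the class of looped-forest multigraphs is closed under both loop deletion and coloop contraction, so that the inductive hypothesis applies to the smaller multigraph. Loop deletion merely removes the loop $e$ from $U$, trivially preserving the looped-forest property. Contracting a bridge merges its two endpoints; since no cycle passes through a bridge, the resulting underlying graph remains acyclic apart from loops. A minor subtlety is that the merged vertex may inherit loops from both endpoints, which could be viewed either as two separate loop edges or as a single loop multiedge whose multiplicity is the sum of the two; but because $y^{\abs{e_u}}y^{\abs{e_v}} = y^{\abs{e_u}+\abs{e_v}}$, the product in the statement is invariant under this bookkeeping choice. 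Multiplying the accumulated factors across the induction then yields the claimed closed form.
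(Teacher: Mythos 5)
Your proof is correct and follows essentially the same route as the paper, whose argument is simply that every edge of a looped forest is a loop or a bridge, so repeated application of Proposition~\ref{proposition:MultigraphDeletionContractionFormula} peels off exactly the stated factors; your induction just makes explicit the (easy) closure of looped-forest multigraphs under loop deletion and bridge contraction. One small slip worth noting: when $\abs{e}\geq 2$ no individual parallel copy of $e$ is a coloop of the cycle matroid of $G$, but this does not matter --- the coloop case of Proposition~\ref{proposition:MultigraphDeletionContractionFormula} only requires that the multiedge $e$ be a coloop (bridge) of the underlying graph $U$, which is exactly what your observation that deleting all parallel copies disconnects the graph establishes.
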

\begin{proof}
    The proof follows immediately from Proposition~\ref{proposition:MultigraphDeletionContractionFormula}.
\end{proof}

\subsection{Graph Simplification}
\label{section:GraphSimplification}

There are a number of techniques that we can use to simplify the graph at each step in the recursion. Firstly, we may remove any isolated vertices, since they do not contribute to the evaluation. 

Secondly, when \mbox{$x=-i\cot\left(\frac{\pi}{4k}\right)$} and \mbox{$y=e^{\frac{i\pi}{2k}}$} for an integer \mbox{$k\in\mathbb{Z}^+$}, we may replace each multiedge with a multiedge of equal multiplicity modulo $4k$. To account for this, we multiply the proceeding evaluation by a efficiently computable factor. Specifically, we invoke the following proposition.
\begin{proposition}[{restate=[name=restatement]GraphEdgeSimplification}]
    \label{proposition:GraphEdgeSimplification}
    Fix \mbox{$k\in\mathbb{Z}^+$}. Let \mbox{$G=(V,E)$} be a multigraph and let \mbox{$G'=(V',E')$} be the graph formed from $G$ by taking the multiplicity of each multiedge in $G$ modulo $4k$. Then,
    \begin{equation}
        \mathrm{T}(G;x,y) = \left(ie^{\frac{i\pi}{4k}}\sin\left(\frac{\pi}{4k}\right)\right)^{\kappa(E)-\kappa(E')}\mathrm{T}(G';x,y), \notag
    \end{equation}
    where \mbox{$x=-i\cot\left(\frac{\pi}{4k}\right)$} and \mbox{$y=e^{\frac{i\pi}{2k}}$}. \notag
\end{proposition}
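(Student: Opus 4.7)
The plan is to reduce the multiplicity of one multi-edge at a time by $4k$ and verify the formula for each such single-edge reduction; the general statement then follows by iteration, with the exponents telescoping. Two algebraic facts drive the argument. Since \mbox{$y = e^{i\pi/(2k)}$}, we have \mbox{$y^{4k} = 1$}, hence \mbox{$\sum_{i=0}^{4k-1} y^i = 0$} and \mbox{$\sum_{i=1}^{4k-1} y^i = -1$}. Moreover, at \mbox{$x = -i\cot(\pi/(4k))$} and \mbox{$y = e^{i\pi/(2k)}$} one checks directly that \mbox{$(x-1)(y-1) = 2$} and \mbox{$y - 1 = 2ie^{i\pi/(4k)}\sin(\pi/(4k))$}, so the claimed prefactor equals \mbox{$(y-1)/2 = 1/(x-1)$}.

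I would induct on the total excess \mbox{$N(G) \coloneqq \sum_{e \in E(U)} \lfloor \abs{e}/4k \rfloor$}, where $U$ denotes the underlying graph. The base case \mbox{$N(G) = 0$} is immediate since \mbox{$G = G'$} and \mbox{$\kappa(E) = \kappa(E')$}. Otherwise, pick any multi-edge $e$ with \mbox{$\abs{e} \geq 4k$} and let $\tilde{G}$ be the graph obtained from $G$ by removing $4k$ parallel copies of $e$. It suffices to prove
\begin{equation}
    \mathrm{T}(G;x,y) = \left(\frac{y-1}{2}\right)^{\kappa(E) - \kappa(\tilde{E})} \mathrm{T}(\tilde{G};x,y), \notag
\end{equation}
since \mbox{$N(\tilde{G}) = N(G) - 1$} and composition with the inductive hypothesis applied to $\tilde{G}$ telescopes the exponents into \mbox{$\kappa(E) - \kappa(E')$}.

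A case analysis on the role of $e$ in $U$ using \mbox{Proposition~\ref{proposition:MultigraphDeletionContractionFormula}} handles the inductive step. When \mbox{$\abs{e} > 4k$} the underlying graph is unchanged so $\kappa$ is preserved; the loop prefactor \mbox{$y^{\abs{e}}$} is invariant because \mbox{$y^{4k}=1$}, and both the coloop prefactor \mbox{$x + \sum_{i=1}^{\abs{e}-1} y^i$} and the mixed prefactor \mbox{$\sum_{i=0}^{\abs{e}-1} y^i$} are invariant because \mbox{$\sum_{i=\abs{e}-4k}^{\abs{e}-1} y^i = y^{\abs{e}-4k}\sum_{j=0}^{4k-1} y^j = 0$}. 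When \mbox{$\abs{e} = 4k$} the multi-edge vanishes in $\tilde{G}$: the loop sub-case is again trivial since \mbox{$y^{4k}=1$} and $\kappa$ is unaffected, and the neither-loop-nor-coloop sub-case is trivial since \mbox{$\sum_{i=0}^{4k-1} y^i = 0$} kills the contraction contribution while $\kappa$ is again unaffected.

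The one non-trivial sub-case is \mbox{$\abs{e} = 4k$} with $e$ a bridge in $U$. Here the multigraph formula yields \mbox{$\mathrm{T}(G;x,y) = (x-1)\,\mathrm{T}(G/e;x,y)$}, removing all copies of a bridge increases the component count by one so \mbox{$\kappa(E) - \kappa(\tilde{E}) = -1$}, and the bridge identity \mbox{$\mathrm{T}(G/e) = \mathrm{T}(G \setminus e) = \mathrm{T}(\tilde{G})$} holds because both sides factor as the product of the Tutte polynomials of the two components separated by the bridge. Combining with \mbox{$(x-1)(y-1) = 2$} gives \mbox{$\mathrm{T}(G;x,y) = (x-1)\,\mathrm{T}(\tilde{G};x,y) = ((y-1)/2)^{-1}\mathrm{T}(\tilde{G};x,y)$}, matching the claimed prefactor. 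I expect this bridge sub-case to be the main obstacle since it ties together the algebraic identity \mbox{$(x-1)(y-1) = 2$}, the vanishing geometric sum, and the connectivity jump; once it is verified the induction closes.
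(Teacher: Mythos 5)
Your proof is correct, and it takes a genuinely different route from the paper's. The paper proves Proposition~\ref{proposition:GraphEdgeSimplification} by passing through the quantum correspondence: it applies Proposition~\ref{proposition:IQPAugmentedGraphTuttePolynomialRelation} to both $G$ and $G'$ and uses the fact that the graph-induced X-program amplitude at angle $\pi/4k$ is unchanged (up to phases that cancel against the prefactors) when multiplicities are reduced modulo $4k$, since a block of $4k$ parallel edges contributes a full rotation of the corresponding gate. Your argument instead stays entirely on the Tutte side: induction on the total excess, the multigraph deletion-contraction formula of Proposition~\ref{proposition:MultigraphDeletionContractionFormula}, the root-of-unity cancellations $y^{4k}=1$, $\sum_{i=0}^{4k-1}y^i=0$, and, in the only delicate sub-case (a bridge of multiplicity exactly $4k$), the identity $\mathrm{T}(G/\{e\})=\mathrm{T}(G{\setminus}\{e\})$, which is legitimately available from the factorizations over connected and biconnected components (Propositions~\ref{proposition:TuttePolynomialConnectedComponents} and~\ref{proposition:TuttePolynomialBiconnectedComponents}); the connectivity jump $\kappa(E)-\kappa(\tilde E)=-1$ there is exactly compensated by $x-1=\bigl((y-1)/2\bigr)^{-1}$, i.e.\ by $(x-1)(y-1)=2$. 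What each approach buys: the paper's proof is three lines given the IQP machinery already built and makes the physical meaning (gate periodicity modulo $4k$) transparent, whereas yours is self-contained and elementary, and in fact proves the stronger statement that for any $x$ and any $y\neq 1$ with $y^{4k}=1$ one has $\mathrm{T}(G;x,y)=(x-1)^{\kappa(E')-\kappa(E)}\mathrm{T}(G';x,y)$, the special point $x=-i\cot(\pi/4k)$, $y=e^{i\pi/2k}$ entering only to rewrite $x-1$ as the stated prefactor $ie^{i\pi/4k}\sin(\pi/4k)=(y-1)/2$.
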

We prove \mbox{Proposition~\ref{proposition:GraphEdgeSimplification}} in \mbox{Appendix~\ref{section:GraphEdgeSimplification}}.

\subsection{Vertigan Graphs}
\label{section:VertiganGraphs}

The Tutte polynomial of a multigraph whose edge multiplicities are all integer multiples of an integer \mbox{$k\in\mathbb{Z}^+$} may be evaluated at the point \mbox{$x=-i\cot\left(\frac{\pi}{4k}\right)$} and \mbox{$y=e^{\frac{i\pi}{2k}}$} in polynomial time. This can be seen by the following proposition.
\begin{proposition}[{restate=[name=restatement]TuttePolynomialVertiganGraph}]
    \label{proposition:TuttePolynomialVertiganGraph}
    Fix \mbox{$k\in\mathbb{Z}^+$}. Let $G=(V,E)$ be a multigraph whose edge multiplicities are all integer multiples of $k$. Further let \mbox{$G'=(V',E')$} be the graph formed from $G$ by taking the multiplicity of each multiedge in $G$ divided by $k$. Then,
    \begin{equation}
        \mathrm{T}(G;x,y) = \left(\sqrt{2}e^{\frac{i\pi(1-k)}{4k}}\sin\left(\frac{\pi}{4k}\right)\right)^{-r(G)}\mathrm{T}(G';-i,i), \notag
    \end{equation}
    where \mbox{$x=-i\cot\left(\frac{\pi}{4k}\right)$} and \mbox{$y=e^{\frac{i\pi}{2k}}$}. \notag
\end{proposition}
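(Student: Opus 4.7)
The plan is to invoke \mbox{Proposition~\ref{proposition:IQPAugmentedGraphTuttePolynomialRelation}} twice, once for $G$ at angle $\pi/(4k)$ and once for $G'$ at angle $\pi/4$, and equate the two expressions. The key observation is that the two X-programs $\mathcal{X}_{G(\pi/(4k))}$ and $\mathcal{X}_{G'(\pi/4)}$ produce the same Hamiltonian, and hence the same principal probability amplitude.

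First I would verify this Hamiltonian identity. Let $m_{uv}$ denote the multiplicity in $G'$ of an edge $\{u,v\}$ of the common underlying graph, so that by hypothesis the multiplicity in $G$ is $km_{uv}$. In either X-program the pair $\{u,v\}$ then contributes
\begin{equation*}
    -\frac{\pi}{4k}\cdot km_{uv}\cdot X_uX_v \;=\; -\frac{\pi}{4}\cdot m_{uv}\cdot X_uX_v
\end{equation*}
to the Hamiltonian, so the Hamiltonians coincide and $\psi_{\mathcal{X}_{G(\pi/(4k))}}(0^{\abs{V}}) = \psi_{\mathcal{X}_{G'(\pi/4)}}(0^{\abs{V'}})$.

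Next I would apply \mbox{Proposition~\ref{proposition:IQPAugmentedGraphTuttePolynomialRelation}} to both sides. Since $-i\cot(\pi/4)=-i$ and $e^{i\pi/2}=i$, this yields
\begin{equation*}
    e^{i\frac{\pi}{4k}(r(G)-\abs{E})}(i\sin(\pi/(4k)))^{r(G)}\mathrm{T}(G;x,y) = e^{i\frac{\pi}{4}(r(G')-\abs{E'})}(i\sin(\pi/4))^{r(G')}\mathrm{T}(G';-i,i).
\end{equation*}
Because $G$ and $G'$ share the same underlying graph, they have the same connected-component count and therefore $r(G)=r(G')$; and because each edge multiplicity of $G'$ is multiplied by $k$ to obtain $G$, we have $\abs{E}=k\abs{E'}$. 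These two identities make the $\abs{E'}$-dependent exponentials cancel when one solves for $\mathrm{T}(G;x,y)$. Using $\sin(\pi/4)=1/\sqrt{2}$, a short computation reduces the remaining prefactor to $(\sqrt{2}\,e^{i\pi(1-k)/(4k)}\sin(\pi/(4k)))^{-r(G)}$, as claimed.

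The main care will lie in the algebraic bookkeeping of the prefactor, in particular checking that the $\abs{E}$- and $\abs{E'}$-dependent exponentials cancel exactly and that the residual $r(G)$-dependent exponential combines correctly with the ratio of sines. Beyond this routine manipulation, I do not anticipate any conceptual obstacle.
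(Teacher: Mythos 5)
Your proposal is correct and follows essentially the same route as the paper: apply Proposition~\ref{proposition:IQPAugmentedGraphTuttePolynomialRelation} to $G$ at angle $\tfrac{\pi}{4k}$ and to $G'$ at angle $\tfrac{\pi}{4}$, equate via the identical Hamiltonians, and simplify the prefactor using $r(G)=r(G')$ and $\abs{E}=k\abs{E'}$. The only difference is that you spell out the Hamiltonian-coincidence step, which the paper's proof uses implicitly when it replaces $\psi_{\mathcal{X}_{G}\left(\frac{\pi}{4k}\right)}$ by $\psi_{\mathcal{X}_{G'}\left(\frac{\pi}{4}\right)}$.
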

We prove \mbox{Proposition~\ref{proposition:TuttePolynomialVertiganGraph}} in \mbox{Appendix~\ref{section:TuttePolynomialVertiganGraph}} and note that this is a special consequence of the \emph{$k$-thickening} approach of Jaeger, Vertigan, and Welsh~\cite{jaeger1990computational}. The Tutte polynomial may then be efficiently computed by Vertigan's algorithm~\cite{vertigan1998bicycle}; we call such a multigraph a \emph{Vertigan graph}. We may therefore prune the computational tree whenever the graph is a Vertigan graph with respect to $k$. Note that this corresponds to quantum circuits comprising gates from the Clifford group.

\subsection{Connected Components}
\label{section:ConnectedComponents}

The Tutte polynomial factorises over components.
\begin{proposition}
    \label{proposition:TuttePolynomialConnectedComponents}
    Let \mbox{$G=(V,E)$} be a graph with connected components $C=\{C_i\}_{i=1}^k$, then,
    \begin{equation}
        \mathrm{T}(G;x,y) = \prod_{i=1}^k\mathrm{T}(C_i;x,y). \notag
    \end{equation}
\end{proposition}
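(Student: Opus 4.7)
The plan is to prove this identity directly from the subset-expansion definition of the Tutte polynomial, by exhibiting a bijection between subsets $A \subseteq E$ and tuples $(A_1,\dots,A_k)$ with $A_i \subseteq E(C_i)$, and then verifying that both exponents split additively across this decomposition. First I would expand the right-hand side as
\begin{equation}
\prod_{i=1}^{k}\sum_{A_i \subseteq E(C_i)}(x-1)^{\kappa(A_i)-\kappa(E(C_i))}(y-1)^{\kappa(A_i)+\abs{A_i}-\abs{V(C_i)}}, \notag
\end{equation}
and distribute the product to obtain a single sum indexed by tuples $(A_1,\dots,A_k)$. Since the edge sets $E(C_i)$ partition $E$, the assignment $A \mapsto (A \cap E(C_i))_{i=1}^{k}$ is a bijection to such tuples, so the distributed sum may be re-indexed over $A \subseteq E$ by setting $A_i \coloneqq A \cap E(C_i)$.

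Next I would verify the two exponent identities. Because no edge of $G$ joins vertices in distinct components, the connected components of the subgraph $(V,A)$ are precisely the disjoint union, over $i$, of the components of $(V(C_i),A_i)$, which gives $\kappa(A)=\sum_{i=1}^{k}\kappa(A_i)$. Moreover $\kappa(E)=k=\sum_{i=1}^{k}\kappa(E(C_i))$ since each $C_i$ is connected, and trivially $\abs{V}=\sum_{i=1}^{k}\abs{V(C_i)}$ and $\abs{A}=\sum_{i=1}^{k}\abs{A_i}$. Combining these yields $\sum_{i=1}^{k}(\kappa(A_i)-\kappa(E(C_i)))=\kappa(A)-\kappa(E)$ and $\sum_{i=1}^{k}(\kappa(A_i)+\abs{A_i}-\abs{V(C_i)})=\kappa(A)+\abs{A}-\abs{V}$, so after collecting the $(x-1)$ and $(y-1)$ factors the summand in the re-indexed product matches exactly the summand in the definition of $\mathrm{T}(G;x,y)$, completing the proof.

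The main obstacle here is little more than bookkeeping; there is no deep step. The one point that warrants care is that $\kappa$ is defined in the paper as the number of components of the subgraph $(V,A)$ rather than $(V(A),A)$, so isolated vertices must be included in the count. This is automatically consistent with the decomposition above, since each isolated vertex lies in exactly one $C_i$ and contributes one to both $\kappa(A)$ and the corresponding $\kappa(A_i)$. An alternative route via induction on $k$ using the deletion-contraction property would also work, but the direct subset-expansion argument above is cleaner and avoids invoking the recursion.
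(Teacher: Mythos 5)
Your proof is correct. The paper does not spell out an argument at all: it simply remarks that the factorisation ``can easily be proven from the deletion-contraction formula'' and omits the proof, i.e.\ its intended route is induction on the number of edges using the recursion (empty graph, loop, coloop, and ordinary-edge cases, checking in each case that the factorisation over components is preserved). You instead argue directly from the subset expansion, splitting each $A \subseteq E$ as $(A \cap E(C_i))_{i=1}^{k}$ and checking that both exponents are additive, using $\kappa(A)=\sum_i \kappa(A_i)$, $\kappa(E)=k$, $\abs{V}=\sum_i\abs{V(C_i)}$, and $\abs{A}=\sum_i\abs{A_i}$; your remark about $\kappa$ counting components of the spanning subgraph $(V,A)$, so that isolated vertices are handled consistently on both sides, is exactly the right point to flag. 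The two routes are of comparable length, but yours avoids the case analysis of the recursion entirely and generalises verbatim to the rank--nullity expansion for direct sums of matroids, whereas the deletion-contraction induction has the virtue of staying within the machinery the paper's algorithm actually uses (and is the same style of argument the paper cites for the biconnected-component analogue, Proposition~\ref{proposition:TuttePolynomialBiconnectedComponents}). Either is a complete and acceptable proof; yours is arguably the cleaner one for this statement.
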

\mbox{Proposition~\ref{proposition:TuttePolynomialConnectedComponents}} can easily be proven from the deletion-contraction formula; we omit the proof. At each step in the deletion-contraction recursion, if the graph is disconnected, then we may use this property to prune the computational tree and hence improve performance.

\subsection{Biconnected Components}
\label{section:BiconnectedComponents}

An identical result holds for biconnected components.
\begin{proposition}[Tutte~\cite{tutte1954contribution}]
    \label{proposition:TuttePolynomialBiconnectedComponents}
    Let \mbox{$G=(V,E)$} be a graph with biconnected components $B=\{B_i\}_{i=1}^k$, then,
    \begin{equation}
        \mathrm{T}(G;x,y) = \prod_{i=1}^k\mathrm{T}(B_i;x,y). \notag
    \end{equation}
\end{proposition}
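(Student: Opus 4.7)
The plan is to reduce \mbox{Proposition~\ref{proposition:TuttePolynomialBiconnectedComponents}} to a one-vertex factorization lemma: if $G$ is the union of subgraphs $G_1$ and $G_2$ that share exactly one vertex $v$ and no edges, then $\mathrm{T}(G;x,y) = \mathrm{T}(G_1;x,y)\mathrm{T}(G_2;x,y)$. Once this lemma is in hand, the proposition follows by a straightforward induction on the number $k$ of biconnected components, using the block-cut tree of $G$.

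To prove the lemma, I would work from the subset-sum (rank-nullity) definition of the Tutte polynomial. Any edge set $A \subseteq E = E_1 \sqcup E_2$ decomposes uniquely as $A = A_1 \cup A_2$ with $A_i = A \cap E_i$. Since $V(G_1)$ and $V(G_2)$ meet only at $v$, the connected components of the subgraph $(V,A)$ are obtained by gluing the components of $(V_1,A_1)$ and $(V_2,A_2)$ along $v$, so
\begin{equation}
    \kappa(A) = \kappa_1(A_1) + \kappa_2(A_2) - 1, \quad \abs{V} = \abs{V_1}+\abs{V_2}-1. \notag
\end{equation}
Using $r(A) = \abs{V} - \kappa(A)$, these identities give $r(A) = r_1(A_1) + r_2(A_2)$, and the specialization $A = E$ yields $r(G) = r(G_1) + r(G_2)$. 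Hence both exponents in the summand of $\mathrm{T}(G;x,y)$ split additively:
\begin{equation}
    (x-1)^{r(G)-r(A)}(y-1)^{\abs{A}-r(A)} = \prod_{i=1}^{2}(x-1)^{r(G_i)-r_i(A_i)}(y-1)^{\abs{A_i}-r_i(A_i)}. \notag
\end{equation}
Summing over $A_1 \subseteq E_1$ and $A_2 \subseteq E_2$ independently yields the claimed factorization.

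For the inductive step, I would invoke the block-cut tree of $G$, whose nodes are the biconnected components together with the cut vertices of $G$ and whose edges record incidence. If $G$ is biconnected, then $k=1$ and the statement is trivial. Otherwise the block-cut tree has at least two leaves; pick one, corresponding to a leaf block $B_1$. A leaf block meets the rest of $G$ in exactly one cut vertex $v$, so if $G_2$ denotes the union of the remaining blocks we have $V(B_1) \cap V(G_2) = \{v\}$ and $E(B_1) \sqcup E(G_2) = E(G)$. The lemma gives $\mathrm{T}(G;x,y) = \mathrm{T}(B_1;x,y)\mathrm{T}(G_2;x,y)$, and the inductive hypothesis applied to $G_2$ (which has $k-1$ blocks, with the same decomposition minus $B_1$) finishes the proof.

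The main obstacle is really just the bookkeeping in the lemma, namely verifying that $\kappa$ and hence $r$ split cleanly across the identified vertex; once that identity is nailed down, the rest of the argument is essentially a tree induction with no further content.
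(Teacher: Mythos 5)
Your proof is correct, but it takes a genuinely different route from the one the paper relies on: the paper does not prove the proposition itself, instead noting that it "can easily be proven from the deletion-contraction formula" and referring to Tutte's original argument~\cite{tutte1954contribution}, which proceeds by induction through the deletion-contraction recursion. You instead work directly from the rank-nullity subset expansion, proving a one-point-union lemma (rank is additive across a shared cut vertex, since \mbox{$\kappa(A)=\kappa_1(A_1)+\kappa_2(A_2)-1$} and \mbox{$\abs{V}=\abs{V_1}+\abs{V_2}-1$}) and then inducting over the block-cut tree. What your route buys is self-containedness and uniformity: the same two-line computation with \mbox{$\kappa(A)=\kappa_1(A_1)+\kappa_2(A_2)$} handles the disjoint-union case, so you get Proposition~\ref{proposition:TuttePolynomialConnectedComponents} for free, and the argument is really a statement about direct sums of matroids, so it generalises beyond graphs. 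What the deletion-contraction route buys is alignment with the recursive framework the rest of the paper (and the algorithm of Section~\ref{section:AlgorithmOverview}) is built on, at the cost of a slightly more delicate induction in which one must track how blocks behave under deletion and contraction.

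One small piece of bookkeeping to tighten: when $G$ is disconnected the block-cut "tree" is a forest, and a leaf block may meet the rest of $G$ in \emph{no} vertex rather than exactly one, so your lemma as stated does not literally apply at that step. Either first reduce to connected $G$ via Proposition~\ref{proposition:TuttePolynomialConnectedComponents}, or state the lemma for both cases (shared vertex or vertex-disjoint), which requires no new computation. You should also note that isolated vertices lie in no $B_i$ but contribute nothing to either side, since $r$ and the subset expansion are unaffected by them.
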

\mbox{Proposition~\ref{proposition:TuttePolynomialBiconnectedComponents}} can easily be proven from the deletion-contraction formula. For a proof, we refer the reader to Ref.~\cite[Section 3]{tutte1954contribution}. 
Similarly to the connected component case, we may use this property to prune the computational tree and improve performance. Note that the biconnected components of a graph may be listed in time linear in the number of edges via depth-first search~\cite{hopcroft1973algorithm}.

\subsection{Multi-Cycles}
\label{section:MultiCycles}

The Tutte polynomial of a multigraph whose underlying graph is a cycle may be computed in polynomial time by invoking the following proposition.
\begin{proposition}[Haggard, Pearce, and Royle~\cite{haggard2010computing}]
    \label{proposition:TuttePolynomialMultiCycle}
    Let \mbox{$G=(V,E)$} be a multigraph whose underlying graph $U$ is an $n$-cycle with edges indexed by the positive integers. Further, for each edge $e$ in $U$, let $\abs{e}$ denote its multiplicity in $G$. Then,
    \begin{align}
        \mathrm{T}(G;x,y) &= \sum_{k=1}^{n-2}\left(\prod_{j=k+1}^ny_x\left(\abs{e_j}\right)\prod_{j=1}^{k-1}y_1\left(\abs{e_j}\right)\right) \notag \\
        &\quad+y_x\left(\abs{e_n}+\abs{e_{n-1}}\right)\prod_{j=1}^{n-2}y_1\left(\abs{e_j}\right), \notag
    \end{align}
    where \mbox{$y_x(j) \coloneqq x+\sum_{i=1}^{j-1}y^i$}.
\end{proposition}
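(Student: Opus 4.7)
The plan is to proceed by induction on $n \geq 2$. For the base case $n = 2$, the outer sum is empty and the trailing term collapses to $y_x(\abs{e_2}+\abs{e_1})$, so we need only verify $\mathrm{T}(G;x,y) = y_x(\abs{e_1}+\abs{e_2})$. This is immediate: a $2$-cycle multigraph consists of two vertices joined by $\abs{e_1}+\abs{e_2}$ parallel edges, whose underlying graph is a single coloop of that multiplicity, and the looped-forest corollary in \mbox{Section~\ref{section:MultigraphDeletionContractionFormula}} gives the desired value.

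For the inductive step, apply the multigraph deletion-contraction formula (\mbox{Proposition~\ref{proposition:MultigraphDeletionContractionFormula}}) to the multiedge $e_1$. This multiedge is neither a loop (its endpoints are distinct) nor a coloop (deleting all its parallel copies leaves the connected path $e_2, \ldots, e_n$), so
\begin{equation}
    \mathrm{T}(G;x,y) = \mathrm{T}(G\setminus\{e_1\};x,y) + y_1(\abs{e_1})\, \mathrm{T}(G/\{e_1\};x,y). \notag
\end{equation}
The deletion $G \setminus \{e_1\}$ has underlying graph an $(n-1)$-edge path, every edge of which is a coloop, so the looped-forest corollary yields $\mathrm{T}(G\setminus\{e_1\};x,y) = \prod_{j=2}^n y_x(\abs{e_j})$. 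The contraction $G/\{e_1\}$ has underlying graph an $(n-1)$-cycle with multiplicities $\abs{e_2}, \ldots, \abs{e_n}$ in cyclic order, to which the inductive hypothesis applies.

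It remains to substitute and re-index. After relabelling the edges of $G/\{e_1\}$ as $f_i = e_{i+1}$ and applying the shifts $j \to j+1$, $k \to k+1$ inside the inductive formula for $\mathrm{T}(G/\{e_1\};x,y)$, the resulting sum runs over $k \in \{2, \ldots, n-2\}$ and the factor $y_1(\abs{e_1})$ is absorbed into $\prod_{j=1}^{k-1} y_1(\abs{e_j})$; the deletion term $\prod_{j=2}^n y_x(\abs{e_j})$ exactly fills the missing $k=1$ slot, and the trailing term becomes $y_x(\abs{e_n}+\abs{e_{n-1}}) \prod_{j=1}^{n-2} y_1(\abs{e_j})$ as required. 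The main (mild) obstacle is this bookkeeping; in particular, one must track that $e_{n-1}$ and $e_n$ remain the ``last two'' edges incident to the contracted vertex after each successive contraction, so that the combined-multiplicity term is preserved throughout the induction rather than shifted.
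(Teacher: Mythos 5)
Your proof is correct and follows exactly the route the paper indicates: the paper omits the argument and defers to Haggard, Pearce, and Royle, whose Theorem~4 is precisely this deletion--contraction induction, and your details check out --- deleting the multiedge $e_1$ yields a path of coloops contributing $\prod_{j=2}^{n}y_x(\abs{e_j})$ (the $k=1$ term), contracting yields the $(n-1)$-cycle on $e_2,\dots,e_n$ whose inductive formula, multiplied by $y_1(\abs{e_1})$ and re-indexed, supplies the remaining terms and the trailing $y_x(\abs{e_n}+\abs{e_{n-1}})\prod_{j=1}^{n-2}y_1(\abs{e_j})$. No gaps; nothing further is needed.
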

\mbox{Proposition~\ref{proposition:TuttePolynomialMultiCycle}} can easily be proven from the deletion-contraction formula. For a proof, we refer the reader to Ref.~\cite[Theorem 4]{haggard2010computing}. We may use this proposition to prune the computational tree whenever the underlying graph is a cycle.

\subsection{Planar Graphs}
\label{section:PlanarGraphs}

The Tutte polynomial of a planar graph along the hyperbola \mbox{$(x-1)(y-1)=2$} may be evaluated in polynomial time via the Fisher-Kasteleyn-Temperley (FKT) algorithm~\cite{temperley1961dimer, kasteleyn1963dimer, kasteleyn1967graph}. We may therefore use this algorithm to prune the computational tree whenever the underlying graph is planar. Note that we may test whether a graph is planar in time linear in the number of vertices~\cite{hopcroft1974efficient}.

\subsection{Edge-Selection Heuristics}
\label{section:EdgeSelectionHeuristics}

The performance of our algorithm depends on the heuristic used to select edges. We shall consider six edge-selection heuristics: \emph{vertex order}, \emph{minimum degree}, \emph{maximum degree}, \emph{minimum degree sum}, \emph{maximum degree sum}, and \emph{non-Vertigan}. These edge-selection heuristics were first studied by Pearce, Haggard, and Royle~\cite{pearce2009edge}, with the exception of non-Vertigan, which is specific to our algorithm.

\emph{Vertex order}: The vertices of the graph are assigned an ordering. A multiedge is selected from those incident to the lowest vertex in the ordering and whose other endpoint is also the lowest vertex of any incident in the ordering. For contractions, the vertex inherits the lowest of the positions in the ordering.

\emph{Minimum degree}: A multiedge is selected from those incident to a vertex with minimal degree in the underlying graph. 

\emph{Maximum degree}: A multiedge is selected from those incident to a vertex with maximal degree in the underlying graph.

\emph{Minimum degree sum}: A multiedge is selected from those whose sum of degrees of its endpoints is minimal in the underlying graph.

\emph{Maximum degree sum}: A multiedge is selected from those whose sum of degrees of its endpoints is maximal in the underlying graph.

\emph{Non-Vertigan}: A multiedge is selected from those whose multiplicity is not an integer multiple of $k$; we call such a multiedge \emph{non-Vertigan}. Using this edge-selection heuristic, the number of recursive calls may be bounded by \mbox{$O\left(2^{\nu(G)}\right)$}, where $\nu(G)$ denotes the number of non-Vertigan multiedges in $G$. This is due to the fact that both the deletion and contraction operation reduce the number of non-Vertigan multiedges by at least one. We note that this is similar to the \emph{Sum-over-Cliffords} approach studied in Refs.~\cite{bravyi2016trading, bravyi2016improved, bravyi2019simulation}.

\subsection{Other Methods}
\label{section:OtherMethods}

There are many other methods that may improve the performance of our algorithm, which we do not study. We shall proceed by discussing some of these.

\emph{Isomorphism testing}: During the computation the graphs encountered and the evaluation of their Tutte polynomial is stored. At each recursive step, we test whether the graph is isomorphic to one already encountered, and if so, we use the evaluation of the isomorphic graph instead. Haggard, Pearce, and Royle~\cite{haggard2010computing} showed that isomorphism testing can lead to an improvement in the performance of computing Tutte polynomials. Note that this may not be as effective when the input is a multigraph.

\emph{Almost planar}: At each step in the recursion, we may test whether the graph is close to being planar, and if so, select edges in such a way that the deletion and contraction operations give rise to a planar graph. For example, if the graph is \emph{apex}, that is, it can be made planar by the removal of a single vertex, then we may select a multiedge incident to such a vertex. Similarly, if the underlying graph is \emph{edge apex} or \emph{contraction apex}~\cite{lipton2017six}, then we may select a multiedge such that the deletion or the contraction operation gives rise to a planar graph.

\emph{$k$-connected components}: Similarly to the connected and biconnected component case, we may compute the Tutte polynomial in terms of its $k$-connected components~\cite{andrzejak1997splitting, bonin2004tutte}.

\section{Experimental Results}
\label{section:ExperimentalResults}

In this section we present some experimental results comparing the performance of the edge-selection heuristics described in \mbox{Section~\ref{section:EdgeSelectionHeuristics}} on two classes of random quantum circuits. Our experiments were performed using SageMath 9.0~\cite{sage2020sagemath}. The source code and experimental data are available at Ref.~\cite{mann2021data}.

The first class we consider corresponds to random instances of IQP circuits induced by dense graphs. Specifically, an instance is an IQP circuit induced by a complete graph with edge weights chosen uniformly at random from the set \mbox{$\{\frac{m\pi}{8} \mid m\in\mathbb{Z}/8\mathbb{Z}\}$}. This class of IQP circuits is precisely that studied in Ref.~\cite{bremner2016average}, where it is conjectured that approximating the corresponding amplitudes up to a multiplicative error is \textsc{\#P}-hard on average.

The second class we consider corresponds to random instances of IQP circuits induced by sparse graphs. Specifically, an instance is an IQP circuit induced by a random graph where each of the possible edges is included independently with probability $1/2$ and with edge weights chosen uniformly at random from the set \mbox{$\{\frac{m\pi}{8} \mid m\in\mathbb{Z}/8\mathbb{Z}\}$}.

We run our algorithm using each of the edge-selection heuristics to compute the principal probability amplitude of $64$ random instances of both the dense and sparse class on $12$ vertices. The performance of each edge-selection heuristic is measured by counting the number of leaves in the computational tree. Our experimental data is presented in \mbox{Appendix~\ref{section:TablesExperimentalData}}. We find that the non-Vertigan edge-selection heuristic performs particularly well for the dense class and the maximum degree sum edge-selection heuristic performs particularly well for the sparse class.

\section{Conclusion \& Outlook}
\label{section:ConclusionAndOutlook}

We established a classical heuristic algorithm for exactly computing quantum probability amplitudes. Our algorithm is based on mapping output probability amplitudes of quantum circuits to evaluations of the Tutte polynomial of graphic matroids. The algorithm evaluates the Tutte polynomial recursively using the deletion-contraction property while attempting to exploit structural properties of the matroid. We considered several edge-selection heuristics and presented experimental results comparing their performance on two classes of random quantum circuits. Further, we obtained an explicit form for Clifford circuit amplitudes in terms of matroid invariants and an alternative efficient classical algorithm for computing the output probability amplitudes of Clifford circuits.
 
\section*{Acknowledgements}

We thank Michael Bremner, Adrian Chapman, Iain Moffatt, Ashley Montanaro, Rudi Pendavingh, and Dan Shepherd for helpful discussions. This research was supported by the QuantERA ERA-NET Cofund in Quantum Technologies implemented within the European Union's Horizon 2020 Programme (QuantAlgo project), EPSRC grants EP/L021005/1, EP/R043957/1, and EP/T001062/1, and the ARC Centre of Excellence for Quantum Computation and Communication Technology (CQC2T), project number CE170100012. Data are available at the University of Bristol data repository, data.bris, at \url{https://doi.org/10.5523/bris.kbhgclva863q21tjkqpyr5uvq}.

\onecolumngrid

\appendix

\section{Proof of Proposition~\ref*{proposition:PottsModelPartitionFunctionExternalField}}
\label{section:PottsModelPartitionFunctionExternalField}

\PottsModelPartitionFunctionExternalField*

\begin{proof}
    By definition,
    \begin{align}
        q^{\kappa(E)}\mathrm{Z}_\mathrm{Potts}(G;q,\Omega,\Upsilon) &= q^{\kappa(E)}\sum_{\sigma\in\mathbb{Z}_q^{V}}\exp\left(\sum_{\{u,v\}\in E}\omega_{\{u,v\}}\delta(\sigma_u,\sigma_v)+\sum_{v\in V}\upsilon_v\delta(\sigma_v)\right) \notag \\
        &= q^{\kappa(E)}\prod_{i=1}^{\kappa(E)}\sum_{\sigma\in\mathbb{Z}_q^{V\left(C_i\right)}}\exp\left(\sum_{\{u,v\}\in E(C_i)}\omega_{\{u,v\}}\delta(\sigma_u,\sigma_v)+\sum_{v\in V(C_i)}\upsilon_v\delta(\sigma_v)\right). \notag
    \end{align}
    Now, by combining terms that are invariant under a $\mathbb{Z}_q$ symmetry, we have
    \begin{align}
        q^{\kappa(E)}\mathrm{Z}_\mathrm{Potts}(G;q,\Omega,\Upsilon) &= \prod_{i=1}^{\kappa(E)}\sum_{\sigma\in\mathbb{Z}_q^{V\left(C_i\right)}}\exp\left(\sum_{\{u,v\} \in E(C_i)}\omega_{\{u,v\}}\delta(\sigma_u,\sigma_v)\right)\sum_{\sigma'\in\mathbb{Z}_q}\exp\left(\sum_{v \in V(C_i)}\upsilon_v\delta(\sigma_v,\sigma')\right)
        \notag \\
        &= \sum_{\sigma\in\mathbb{Z}_q^{V'}}\exp\left(\sum_{\{u,v\} \in E}\omega_{\{u,v\}}\delta(\sigma_u,\sigma_v)+\sum_{\{u,v\} \in E' {\setminus} E}\upsilon_v\delta(\sigma_u,\sigma_v)\right) \notag \\
        &= \mathrm{Z}_\mathrm{Potts}(G';q,\Omega\cup\Upsilon,0). \notag
    \end{align}
    This completes the proof.
\end{proof}

\section{Proof of Proposition~\ref*{proposition:IQPIsingModelPartitionFunctionRelation}}
\label{section:IQPIsingModelPartitionFunctionRelation}

\IQPIsingModelPartitionFunctionRelation*

\begin{proof}
    By definition,
    \begin{align}
        \mbox{$\psi_{\mathcal{X}_G}(0^{\abs{V}})$} &= \bra{0^{\abs{V}}}\exp\left(i\sum_{\{u,v\} \in E}\omega_{\{u,v\}}X_uX_v+i\sum_{v \in V}\upsilon_vX_v\right)\ket{0^{\abs{V}}} \notag \\
        &= \bra{+^{\abs{V}}}\exp\left(i\sum_{\{u,v\} \in E}\omega_{\{u,v\}}Z_uZ_v+i\sum_{v \in V}\upsilon_vZ_v\right)\ket{+^{\abs{V}}} \notag \\
        &= \frac{1}{2^{\abs{V}}}\sum_{x,y\in\{0,1\}^V}\bra{y}\exp\left(i\sum_{\{u,v\} \in E}\omega_{\{u,v\}}Z_uZ_v+i\sum_{v \in V}\upsilon_vZ_v\right)\ket{x} \notag \\
        &= \frac{1}{2^{\abs{V}}}\sum_{x\in\{0,1\}^V}\exp\left(i\sum_{\{u,v\} \in E}\omega_{\{u,v\}}(-1)^{x_u \oplus x_v}+i\sum_{v \in V}\upsilon_v(-1)^{x_v}\right) \notag \\
        &= \frac{1}{2^{\abs{V}}}\sum_{z\in\{-1,+1\}^V}\exp\left(i\sum_{\{u,v\} \in E}\omega_{\{u,v\}}z_uz_v+i\sum_{v \in V}\upsilon_vz_v\right) \notag \\
        &= \frac{1}{2^{\abs{V}}}\mathrm{Z}_\mathrm{Ising}(G;i\Omega,i\Upsilon). \notag
    \end{align}
    This completes the proof.
\end{proof}

\section{Proof of Proposition~\ref*{proposition:IQPAugmentedGraphRelation}}
\label{section:IQPAugmentedGraphRelation}

\IQPAugmentedGraphRelation*

\begin{proof}
    \begin{align}
        \psi_{\mathcal{X}_{G}(\theta)}\left(0^{\abs{V}}\right) &= \frac{1}{2^{\abs{V}}}\mathrm{Z}_\mathrm{Ising}(G;i\theta,i\theta)\quad\text{(by \mbox{Proposition~\ref{proposition:IQPIsingModelPartitionFunctionRelation}})} \notag \\
        &= \frac{1}{2^{\abs{V}}}e^{-i\theta\left(\abs{E}+\abs{V}\right)}\mathrm{Z}_\mathrm{Potts}(G;2,2i\theta,2i\theta)\quad\text{(by \mbox{Proposition~\ref{proposition:PottsIsingRelation}})} \notag \\
        &= \frac{1}{2^{\abs{V}+\kappa(E)}}e^{-i\theta\left(\abs{E}+\abs{V}\right)}\mathrm{Z}_\mathrm{Potts}(G';2,2i\theta,0)\quad\text{(by \mbox{Proposition~\ref{proposition:PottsModelPartitionFunctionExternalField}})} \notag \\
        &= \frac{1}{2^{\abs{V}+\kappa(E)}}\mathrm{Z}_\mathrm{Ising}(G';i\theta,0)\quad\text{(by \mbox{Proposition~\ref{proposition:PottsIsingRelation}})} \notag \\
        &= \psi_{\mathcal{X}_{G'}(\theta)}\left(0^{\abs{V'}}\right)\quad\text{(by \mbox{Proposition~\ref{proposition:IQPIsingModelPartitionFunctionRelation}})}. \notag
    \end{align}
    This completes the proof.
\end{proof}

\section{Proof of Proposition~\ref*{proposition:IQPAugmentedGraphTuttePolynomialRelation}}
\label{section:IQPAugmentedGraphTuttePolynomialRelation}

\IQPAugmentedGraphTuttePolynomialRelation*

\begin{proof}
    \begin{align}
        \psi_{\mathcal{X}_{G'}(\theta)}\left(0^{\abs{V'}}\right) &= \frac{1}{2^{\abs{V'}}}\mathrm{Z}_\mathrm{Ising}(G';i\theta,0)\quad\text{(by \mbox{Proposition~\ref{proposition:IQPIsingModelPartitionFunctionRelation}})} \notag \\
        &= \frac{1}{2^{\abs{V'}}}e^{-i\theta\abs{E'}}\mathrm{Z}_\mathrm{Potts}(G';2,2i\theta,0)\quad\text{(by \mbox{Proposition~\ref{proposition:PottsIsingRelation}})} \notag \\
        &= \frac{1}{2^{r(G')}}e^{-i\theta\abs{E'}}(e^{2i\theta}-1)^{r(G')}\mathrm{T}\left(G';-i\cot(\theta),e^{2i\theta}\right)\quad\text{(by \mbox{Proposition~\ref{proposition:PottsModelPartitionFunctionTuttePolynomialRelation}})} \notag \\
        &= e^{i\theta\left(r(G')-\abs{E'}\right)}(i\sin(\theta))^{r(G')}\mathrm{T}\left(G';-i\cot(\theta),e^{2i\theta}\right). \notag
    \end{align}
    This completes the proof.
\end{proof}

\section{Proof of Proposition~\ref*{proposition:GraphEdgeSimplification}}
\label{section:GraphEdgeSimplification}

\GraphEdgeSimplification*

\begin{proof}
    \begin{align}
        \mathrm{T}(G;x,y) &= e^{\frac{i\pi}{4k}\left(\abs{E}-r(G)\right)}\left(i\sin\left(\frac{\pi}{4k}\right)\right)^{-r(G)}\psi_{\mathcal{X}_{G}\left(\frac{\pi}{4k}\right)}\left(0^{\abs{V}}\right)\quad\text{(by \mbox{Proposition~\ref{proposition:IQPAugmentedGraphTuttePolynomialRelation}})} \notag \\
        &= e^{\frac{i\pi}{4k}\left(\abs{E}-r(G)\right)}\left(i\sin\left(\frac{\pi}{4k}\right)\right)^{-r(G)}\psi_{\mathcal{X}_{G'}\left(\frac{\pi}{4k}\right)}\left(0^{\abs{V'}}\right) \notag \\
        &= e^{\frac{i\pi}{4k}\left(\abs{E}-\abs{E'}\right)}\left(ie^{\frac{i\pi}{4k}}\sin\left(\frac{\pi}{4k}\right)\right)^{r(G')-r(G)}\mathrm{T}(G';x,y)\quad\text{(by \mbox{Proposition~\ref{proposition:IQPAugmentedGraphTuttePolynomialRelation}})} \notag \\
        &= \left(ie^{\frac{i\pi}{4k}}\sin\left(\frac{\pi}{4k}\right)\right)^{\kappa(E)-\kappa(E')}\mathrm{T}(G';x,y). \notag
    \end{align}
    This completes the proof.
\end{proof}

\section{Proof of Proposition~\ref*{proposition:TuttePolynomialVertiganGraph}}
\label{section:TuttePolynomialVertiganGraph}

\TuttePolynomialVertiganGraph*

\begin{proof}
    \begin{align}
        \mathrm{T}(G;x,y) &= e^{\frac{i\pi}{4k}\left(\abs{E}-r(G)\right)}\left(i\sin\left(\frac{\pi}{4k}\right)\right)^{-r(G)}\psi_{\mathcal{X}_{G}\left(\frac{\pi}{4k}\right)}\left(0^{\abs{V}}\right)\quad\text{(by \mbox{Proposition~\ref{proposition:IQPAugmentedGraphTuttePolynomialRelation}})} \notag \\
        &= e^{\frac{i\pi}{4k}\left(\abs{E}-r(G)\right)}\left(i\sin\left(\frac{\pi}{4k}\right)\right)^{-r(G)}\psi_{\mathcal{X}_{G'}\left(\frac{\pi}{4}\right)}\left(0^{\abs{V'}}\right) \notag \\
        &= \left(\sqrt{2}e^{\frac{i\pi(1-k)}{4k}}\sin\left(\frac{\pi}{4k}\right)\right)^{-r(G)}\mathrm{T}(G';-i,i)\quad\text{(by \mbox{Proposition~\ref{proposition:IQPAugmentedGraphTuttePolynomialRelation}})}. \notag
    \end{align}
    This completes the proof.
\end{proof}

\section{Tables of Experimental Data}
\label{section:TablesExperimentalData}

We present our experimental data in \mbox{Table~\ref{table:TableExperimentalDataDense}} and \mbox{Table~\ref{table:TableExperimentalDataSparse}}. The rows of the tables represent edge-selection heuristics and the columns represent quantities relating to the number of leaves in the computational trees.

\begin{table}[ht]
\centering
\begin{tabular}{lccccccc}
    \hline
     & \textbf{Sum} & \textbf{Mean} & \textbf{Mean Deviation} & \textbf{\#Empty} & \textbf{\#Vertigan} & \textbf{\#Multicycle} & \textbf{\#Planar} \\
    \hline
    \textbf{Non-Vertigan} & $8888920$ & $138889$ & $36225$ & $912$ & $2124367$ & $50097$ & $6713544$ \\
    \textbf{Vertex Order} & $37173344$ & $580834$ & $150982$ & $186$ & $167486$ & $553618$ & $36452054$ \\
    \textbf{Minimum Degree} & $57014650$ & $890854$ & $162094$ & $0$ & $353780$ & $1238429$ & $55422441$ \\
    \textbf{Maximum Degree} & $28604576$ & $446947$ & $119407$ & $950$ & $86125$ & $412215$ & $28105286$ \\
    \textbf{Minimum Degree Sum} & $50716183$ & $792440$ & $170360$ & $0$ & $243290$ & $1010284$ & $49462609$ \\
    \textbf{Maximum Degree Sum} & $10993306$ & $171770$ & $39409$ & $6971$ & $8998$ & $78030$ & $10899307$ \\
    \hline
\end{tabular}
\caption{Performance of the edge-selection heuristics on $64$ random instances of the dense class on $12$ vertices.}
\label{table:TableExperimentalDataDense}
\end{table}

\begin{table}[ht]
\centering
\begin{tabular}{lccccccc}
    \hline
     & \textbf{Sum} & \textbf{Mean} & \textbf{Mean Deviation} & \textbf{\#Empty} & \textbf{\#Vertigan} & \textbf{\#Multicycle} & \textbf{\#Planar} \\
    \hline
    \textbf{Non-Vertigan} & $63958$ & $999$ & $973$ & $30$ & $7994$ & $467$ & $55467$ \\
    \textbf{Vertex Order} & $93642$ & $1463$ & $1518$ & $88$ & $74$ & $813$ & $92667$ \\
    \textbf{Minimum Degree} & $412557$ & $6446$ & $6316$ & $0$ & $2158$ & $7285$ & $403114$ \\
    \textbf{Maximum Degree} & $91218$ & $1425$ & $1476$ & $16$ & $87$ & $933$ & $90182$ \\
    \textbf{Minimum Degree Sum} & $291763$ & $4559$ & $4630$ & $0$ & $1138$ & $4169$ & $286456$ \\
    \textbf{Maximum Degree Sum} & $50375$ & $787$ & $808$ & $14$ & $25$ & $415$ & $49921$ \\
    \hline
\end{tabular}
\caption{Performance of the edge-selection heuristics on $64$ random instances of the sparse class on $12$ vertices.}
\label{table:TableExperimentalDataSparse}
\end{table}

\twocolumngrid

\bibliography{bibliography}

\end{document}